\definecolor{myurlcolor}{rgb}{0,0,0.9}
\newcommand{\proj}[1]{| #1\rangle\!\langle #1 |}
\DeclareMathOperator{\trace}{Tr}
\newcommand{\Ptr}[2]{\trace_{#1}\Pa{#2}}
\newcommand{\Tr}[1]{\Ptr{}{#1}}
\newcommand{\Innerm}[3]{\left\langle #1 \left| #2 \right| #3 \right\rangle}
\newcommand{\Pa}[1]{\left[#1\right]}
\newcommand{\norm}[1]{\left\lVert #1 \right\rVert}
\theoremstyle{plain}
\newtheorem{thm}{Theorem}
\newtheorem{lem}[thm]{Lemma}
\newtheorem{prop}[thm]{Proposition}
\newtheorem{cor}[thm]{Corollary}
\newcommand*{\myproofname}{Proof}
\newenvironment{mproof}[1][\myproofname]{\begin{proof}[#1]}{\end{proof}}
\def\ot{\otimes}
\def\complex{\mathbb{C}}
\def\real{\mathbb{R}}
\def\cI{\mathcal{I}}
\def\cM{\mathcal{M}}
\def\cN{\mathcal{N}}
\def\cD{\mathcal{D}}
\def\cH{\mathcal{H}}
\DeclareMathAlphabet{\mathcal}{OMS}{cmsy}{m}{n}
\begin{document}

  \author{Lu Li}
  \email{lilu93@zju.edu.cn}
 \affiliation{School of Mathematical Sciences, Zhejiang University, Hangzhou 310027, PR~China}
  \author{Kaifeng Bu}
 \email{kfbu@fas.harvard.edu; bkf@zju.edu.cn}
 \affiliation{School of Mathematical Sciences, Zhejiang University, Hangzhou 310027, PR~China}
\affiliation{Department of Physics, Harvard University,  Cambridge, Massachusetts 02138, USA}

  \author{Zi-Wen Liu}
 \email{zliu1@perimeterinstitute.ca; zwliu@mit.edu}
  \affiliation{Perimeter Institute for Theoretical Physics, Waterloo, Ontario N2L 2Y5, Canada}
 \affiliation{Center for Theoretical Physics, Research Laboratory of Electronics, and Department of Physics, Massachusetts Institute of Technology, Cambridge, Massachusetts 02139, USA}

\title{Quantifying the resource content of quantum channels: An operational approach}

\begin{abstract}
We propose a general method to operationally quantify the ``resourcefulness'' of quantum channels
via channel discrimination, an important information processing task.  A main result is that 
the maximum success probability of distinguishing a given channel from the 
set of free channels by free probe states is exactly characterized by the resource generating power, i.e.~the maximum amount of resource produced by the action of the channel, given by the trace distance to the set of free states.  
We apply this framework to the resource theory of quantum coherence, as an informative example.    The general results can also be easily applied to other resource theories such as entanglement, magic states, and asymmetry.

\end{abstract}

\maketitle

\section{Introduction}
Understanding and utilizing various forms of quantum resources represents a main theme of quantum information science.     To this end, a powerful framework known as the quantum resource theory is being actively developed in recent years to systematically study the quantification and manipulation of quantum resources (see \cite{ChiGourRev} for a recent review). 
In fact, the resource features of certain quantum effects, in particular quantum entanglement, have already been carefully studied earlier \cite{Plenio2007,HorodeckiRMP09,Nielsen10},
but a key observation underlying the recent interests in the resource theory framework is that the theories of different kinds of resource properties (stemming from different physical constraints) can share a largely common structure and a wide range of general approaches and results \cite{Oppenheim13,FBrandao15,Liu2017,Bartosz2017,Anshu17,Ryuji2018,PhysRevLett.122.140403,LiuBuTakagi:oneshot}.  Indeed, this idea has been successfully applied to the study of various other key quantum resources, such as 
coherence \cite{Baumgratz2014,Winter2016,RevModPhys.89.041003}, superposition \cite{Theurer2017}, magic states \cite{magic,howard_2017},
thermal non-equilibrium \cite{Fernando2013,HorodeckiOppenheim13},
asymmetry  
\cite{PhysRevA.80.012307,noether}, etc.

The well-established schemes of resource theory (at a non-abstract level; see e.g.~\cite{COECKE201659,fritz_2017} for abstract, category-theoretic formulations that do not rely on the explicit mathematical structures of the object space) mostly handle in particular static resources encoded in quantum states (density operators).    However, certain quantum processes or channels can represent dynamical quantum resources which play natural and fundamental roles in broad scenarios.  The systematic study of channel resource theories is blueprinted recently by \cite{LiuWinter2018}, but we are still at an early stage of developing the complete theory.

The quantification of resource is a central topic of all kinds of resource theories.  In particular, one is interested in the operational interpretation of certain resource measures, i.e.~how they correspond to the value of the resource in achieving some operational task. 
In state resource theories, general operational resource measures can be induced by several tasks, e.g.~resource interconversion \cite{Oppenheim13,FBrandao15,LiuBuTakagi:oneshot}, resource erasure \cite{Anshu17}. However, for quantum channels, we only know that
the smooth log-robustness characterizes the randomness cost of the task of one-shot resource erasure \cite{LiuWinter2018} at the general level.  (Note that the quantification of channel resources have been previously considered in various specific contexts, such as entanglement \cite{bip}, coherence \cite{Dana2017,Theurer2018}, non-Gaussianity \cite{ZhuangPRA18}, and magic \cite{WangWildeSu:channel_magic}).

In this work, we suggest a simple and general scheme of quantify the resourcefulness of quantum channels based on quantum channel discrimination, a fundamental problem in quantum information \cite{Acin2001, Wang2006, Pirandola18}. (Note that channel discrimination 
is already known to play key roles in the characterization of state resources \cite{Napoli2016,BuPRL2017,Ryuji2018,PhysRevLett.122.140403,PhysRevLett.122.140404}.) The core question here is how well one can distinguish a quantum channel from another by optimizing over input probe states and output measurements.
We find that the maximum success probability of distinguishing 
the given channel from the set of free operations by all free probe  states is 
exactly characterized by the maximum amount of resource that can be generated by the channel, i.e.~the resource generating power, as measured by the trace-norm distance of resource. 
This resource generating power satisfies several desirable 
properties, such 
as faithfulness, convexity,  sub-multiplicity and monotonicity.
Besides, the advantage of using a resource state as the probe state, compared with 
free probe states, is upper-bounded by the trace-norm measure of resource. 
As a prominent example, we analyze in depth the widely-studied resource theory of coherence, the structure of which allows for further results.  
Our study leads to several new understandings of the coherence theory.
This approach can be  easily generalized to many other important resource theories. As an example, we state a basic result for entanglement theory.

\section{Main results}

Given a finite dimensional Hilbert space $\cH$, let $ \mathcal{D}(\mathcal{H})$ denote the set of all quantum states 
on $\mathcal{H}$.  Assume the set of free states $\mathcal{F} $ to be a non-empty, convex and closed subset of 
$\mathcal{D}(\mathcal{H})$. 
Let $\mathfrak{F}$ be the set 
of free quantum channels, or completely positive and trace preserving (CPTP) maps.  Channels in $\mathfrak{F} $ must map all free states to free states.  

Define the
resource generating/increasing power ($\Omega/\widetilde{\Omega}$) of channel $\cN: \mathcal{D}(\mathcal{H})\longrightarrow \mathcal{D}(\mathcal{H})$ as follows.  Given some resource monotone of states $\omega$ and the set of free states $\mathcal{F}$:
\begin{eqnarray}
 \Omega(\cN) &:=& \max_{\rho\in \mathcal{F}} \omega(\cN(\rho)),\\
  \widetilde\Omega(\cN) &:=& \max_{\rho\in \mathcal{D}(\mathcal{H})} [\omega(\cN(\rho))-\omega(\rho)].
\end{eqnarray}
  Note that the complete versions  of resource generating/increasing power can also be defined which, in addition, optimize over any ancilla space (see \cite{LiuWinter2018} for extended discussions).  

A representative type of resource monotones is the distance to $\mathcal{F}$. More explicitly, given some distance measure $D$,
one can define resource measure $\omega_D$ for any quantum state $\rho$ as follows:
\begin{eqnarray}
\omega_D(\rho)
:=\min_{\sigma\in \mathcal{F}}
D(\rho, \sigma).
\end{eqnarray}
The resource generating/increasing power given by
$\omega_D$ is denoted $\Omega_D$/$\widetilde{\Omega}_D$.
It can be shown that they are actually equivalent for contractive distance metrics (see the proof in Appendix A of Supplemental Material):
 \begin{prop}\label{prop:g=i}
If the distance measure $D$ satisfies the 
triangle inequality and the data processing inequality ( i.e., non-increasing under CPTP maps), then we have
\begin{eqnarray}
\Omega_D(\cN)
=\widetilde{\Omega}_D(\cN).
\end{eqnarray}
\end{prop}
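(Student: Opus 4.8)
The plan is to prove the equality by establishing the two inequalities $\Omega_D(\cN) \le \widetilde{\Omega}_D(\cN)$ and $\widetilde{\Omega}_D(\cN) \le \Omega_D(\cN)$ separately. The first direction is immediate: every free state $\rho \in \mathcal{F}$ satisfies $\omega_D(\rho) = 0$, since $\rho$ itself is an admissible minimizer in the definition of $\omega_D$ and $D(\rho,\rho)=0$. Hence for such $\rho$ one has $\omega_D(\cN(\rho)) = \omega_D(\cN(\rho)) - \omega_D(\rho)$, which is at most $\widetilde{\Omega}_D(\cN)$ by definition; maximizing the left-hand side over $\rho \in \mathcal{F}$ yields $\Omega_D(\cN) \le \widetilde{\Omega}_D(\cN)$.

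The reverse inequality is the substantive part and is where I expect the main work to lie. I would fix an arbitrary $\rho \in \mathcal{D}(\mathcal{H})$ and let $\sigma^\ast \in \mathcal{F}$ be a free state achieving $\omega_D(\rho) = D(\rho, \sigma^\ast)$; such a minimizer exists because $\mathcal{F}$ is closed (hence compact in finite dimension) and $D$ is continuous. The key idea is to route the output $\cN(\rho)$ through the image $\cN(\sigma^\ast)$ of this optimal free state. For every $\sigma \in \mathcal{F}$ the triangle inequality gives $D(\cN(\rho), \sigma) \le D(\cN(\rho), \cN(\sigma^\ast)) + D(\cN(\sigma^\ast), \sigma)$, and minimizing over $\sigma \in \mathcal{F}$ produces $\omega_D(\cN(\rho)) \le D(\cN(\rho), \cN(\sigma^\ast)) + \omega_D(\cN(\sigma^\ast))$.

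Next I would invoke the data processing inequality on the first term, $D(\cN(\rho), \cN(\sigma^\ast)) \le D(\rho, \sigma^\ast) = \omega_D(\rho)$, to obtain the subadditive-type bound $\omega_D(\cN(\rho)) - \omega_D(\rho) \le \omega_D(\cN(\sigma^\ast))$. Since $\sigma^\ast \in \mathcal{F}$, the right-hand side is at most $\max_{\tau \in \mathcal{F}} \omega_D(\cN(\tau)) = \Omega_D(\cN)$. As $\rho$ was arbitrary, taking the maximum over $\rho \in \mathcal{D}(\mathcal{H})$ on the left gives $\widetilde{\Omega}_D(\cN) \le \Omega_D(\cN)$, completing the argument.

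The crucial step, and the only place where both hypotheses on $D$ are genuinely used, is the insertion of $\cN(\sigma^\ast)$ as an intermediary: the triangle inequality bounds the free-distance of $\cN(\rho)$ by its distance to $\cN(\sigma^\ast)$ plus the residual free-distance of $\cN(\sigma^\ast)$, while data processing converts the former into $\omega_D(\rho)$, so that the excess resource created is controlled entirely by the output of a \emph{free} input. The remaining details (existence of the minimizer, nonnegativity, and $D(\rho,\rho)=0$) are routine and I would relegate them to brief remarks.
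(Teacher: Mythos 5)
Your proof is correct and follows essentially the same route as the paper's: the easy direction is handled by noting free states have zero resource, and the reverse direction inserts $\cN$ applied to a free state as an intermediary, using the triangle inequality and then data processing to get $\omega_D(\cN(\rho))-\omega_D(\rho)\leq\omega_D(\cN(\sigma^\ast))\leq\Omega_D(\cN)$. The only cosmetic difference is that you instantiate the optimal free state $\sigma^\ast$ directly, whereas the paper carries a $\max_\tau$ over all free states through the chain of inequalities; the two are equivalent.
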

Of particular importance to this work is the trace distance $\frac{1}{2}\norm{\rho-\sigma}_1:=\frac{1}{2}\mathrm{Tr}{|\rho-\sigma|}$ , which we denote by subscript ``1''. 

Here, we aim at establishing connections between the resource generating power of a channel and its non-free feature in the task of channel discrimination.  
Given two channels $\cN$ and $\cM$,  and the same probe state $\rho$ going through the channels $\cN,\cM$ respectively,   then the success probability of distinguishing $\cN$ and $\cM$ by the probe state $\rho$ is the success probability of distinguishing 
$\cN(\rho)$ and $\cM(\rho)$ as follows
\begin{align}
  &p_{\rm succ}(\cN,\cM,\rho)\nonumber\\
   =&\max_{\{\Pi, \mathbb{I}-\Pi\}}\left\{\frac{1}{2}\Tr{\cN(\rho)\Pi}+\frac{1}{2}\Tr{\cM(\rho)(\mathbb{I}-\Pi)}\right\},~~~~
\end{align}
where the maximization is taken over all POVM $\{\Pi, \mathbb{I}-\Pi\}$.  By the Holevo-Helstrom Theorem \cite{Helstrom76},
$p_{\rm succ}(\cN,\cM,\rho)=\frac{1}{2}+\frac{1}{4}\norm{\cN(\rho)-\cM(\rho)}_1$.

The success probability of distinguishing $\cN$ from the set of channels $\mathfrak{F}$ by the probe state $\rho$ is defined as
\begin{eqnarray}
  p_{\rm succ}(\cN,\mathfrak{F},\rho):=\min_{\cM\in \mathfrak{F} }p_{\rm succ}(\cN,\cM,\rho),
\end{eqnarray}
and the maximum success probability of distinguishing $\cN$ from $\mathfrak{F}$ by using any free state or any quantum state (denoted by $Q$) as the probe state 
are respectively given by
\begin{eqnarray}
  p_{\rm succ}(\cN,\mathfrak{F}, \mathcal{F}):=\max_{\rho\in \mathcal{F} }p_{\rm succ}(\cN,\mathfrak{F},\rho), \\
  p_{\rm succ}(\cN,\mathfrak{F}, Q):=\max_{\rho\in \mathcal{D}(\cH) }p_{\rm succ}(\cN,\mathfrak{F},\rho).
\end{eqnarray}

The following result provides an exact characterization
of the success probability $ p_{\rm succ}(\cN,\mathfrak{F},\mathcal{F})$:

\begin{thm}\label{thm:1}
Given a quantum channel $\cN$ and the set of free channels $\mathfrak{F}$.  The maximum success probability of discriminating $\cN$ from $\mathfrak{F}$ by the set of free states $\mathcal{F}$ is only  directly related to the resource increasing power given by trace distance  (which equals the generating power due to Proposition \ref{prop:g=i}) of $\cN$ as follows:
\begin{equation}
    p_{\mathrm{succ}}(\cN,\mathfrak{F}, \mathcal{F}) = \frac{1}{2} + \frac{1}{2}\widetilde\Omega_{1}(\cN) = \frac{1}{2} + \frac{1}{2}\Omega_{1}(\cN).\\
\end{equation}
\end{thm}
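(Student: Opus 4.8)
The plan is to unfold both sides into optimizations over trace distance and then show that the channel-minimization appearing in the discrimination quantity collapses to the state-minimization defining $\Omega_1$, separately for each fixed probe state (so that no exchange of $\max$ and $\min$ is needed).

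First I would expand the left-hand side using the definitions together with the Holevo--Helstrom formula $p_{\rm succ}(\cN,\cM,\rho)=\tfrac12+\tfrac14\norm{\cN(\rho)-\cM(\rho)}_1$, pulling the additive and multiplicative constants outside the optimizations to obtain
\[
p_{\rm succ}(\cN,\mathfrak{F},\mathcal{F})=\frac12+\frac14\max_{\rho\in\mathcal{F}}\min_{\cM\in\mathfrak{F}}\norm{\cN(\rho)-\cM(\rho)}_1 .
\]
On the other side, $\Omega_1(\cN)=\max_{\rho\in\mathcal{F}}\min_{\sigma\in\mathcal{F}}\tfrac12\norm{\cN(\rho)-\sigma}_1$. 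Hence the theorem reduces to the pointwise identity
\[
\min_{\cM\in\mathfrak{F}}\norm{\cN(\rho)-\cM(\rho)}_1=\min_{\sigma\in\mathcal{F}}\norm{\cN(\rho)-\sigma}_1 \qquad\text{for every fixed }\rho\in\mathcal{F}.
\]

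For the direction ``$\geq$'' I would use the defining property of free channels: for $\rho\in\mathcal{F}$ and $\cM\in\mathfrak{F}$ one has $\cM(\rho)\in\mathcal{F}$, so the left-hand minimization ranges over a subset of the admissible $\sigma$, and the inequality follows at once. The reverse direction ``$\leq$'' is the crux. Here I would exhibit, for any target free state $\sigma\in\mathcal{F}$, a specific free channel that reproduces it on the given input, namely the replacement (trace-and-prepare) channel $\cR_\sigma(X):=\sigma\,\Tr{X}$. This map is CPTP and sends every state---in particular every free state---to $\sigma\in\mathcal{F}$, and therefore qualifies as a free channel; moreover $\cR_\sigma(\rho)=\sigma$, whence $\min_{\cM\in\mathfrak{F}}\norm{\cN(\rho)-\cM(\rho)}_1\leq\norm{\cN(\rho)-\sigma}_1$. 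Minimizing over $\sigma\in\mathcal{F}$ then closes the gap and yields the pointwise equality.

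Finally I would apply $\max_{\rho\in\mathcal{F}}$ to this identity, reassemble the constants to get $p_{\rm succ}(\cN,\mathfrak{F},\mathcal{F})=\tfrac12+\tfrac12\Omega_1(\cN)$, and invoke Proposition \ref{prop:g=i}---valid since the trace distance obeys the triangle and data-processing inequalities---to replace $\Omega_1(\cN)$ by $\widetilde\Omega_1(\cN)$. The main obstacle is the ``$\leq$'' direction, i.e.\ realizing an arbitrary free state as the output of a genuine free channel acting on the fixed probe; the replacement channel resolves this cleanly, and because it works for each $\rho$ independently it lets us avoid any order-of-optimization subtleties entirely.
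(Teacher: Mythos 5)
Your proposal is correct and takes essentially the same approach as the paper: the paper's Lemma~\ref{lem:5} likewise reduces the claim to $\Omega_1(\cN)=\frac{1}{2}\max_{\rho\in \mathcal{F}}\min_{\cM\in \mathfrak{F}}\norm{\cN(\rho)-\cM(\rho)}_1$, handling the easy direction via $\cM(\rho)\in\mathcal{F}$ and the other via a constant (replacement) channel preparing the closest free state, before invoking Proposition~\ref{prop:g=i}. Both arguments rest on the same implicit assumption that such replacement channels onto free states belong to $\mathfrak{F}$.
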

The proof of this theorem is provided in Appendix A of Supplemental Material.
We now show that $\Omega_{1}(\cN)$ satisfies the basic conditions for resource quantifiers of quantum channels, e.g.~normalized, and monotone under left and 
right compositions with free channels \cite{LiuWinter2018}.  More specifically, 
 \begin{prop}\label{prop:pro1}
 The trace-norm resource generating power $\Omega_{1}(\cN)$
satisfies  the following properties: 
 
(i) $\Omega_{1}(\cN)\geq 0$, and 
$\Omega_{1}(\cN)=0$ if $\cN\in \mathfrak{F}$. Moreover, if $\mathfrak{F}$ includes 
all CPTP maps which maps all free states to free states (resource non-generating maps), then $\Omega_{1}(\cN)=0$ iff $\cN\in \mathfrak{F}$.

(ii) For any $\cM_1, \cM_2\in \mathfrak{F}$,  we have 
\begin{eqnarray}
 \Omega_{1}(\cM_1\circ \cN\circ \cM_2)
\leq  \Omega_{1}(\cN).
\end{eqnarray}

(iii) Given a set of quantum 
channels $\set{\cN_i, p_i}_i$ with $\sum_ip_i=1$, 
\begin{eqnarray}
\Omega_1(\sum_i p_i\cN_i)
\leq\sum_ip_i\Omega_1(\cN_i).
\end{eqnarray}
Moreover, if the free states on $\mathcal{H}_A\otimes \mathcal{H}_B$ 
is defined as convex combination of the tensor product of 
free states on $\mathcal{H}_A$ and $\mathcal{H}_B$, i.e.,
$\mathcal{F}_{AB}=Conv\set{\mathcal{F}_A\otimes \mathcal{F}_B }$, 
then resource generating power $\Omega_{1}(\cN)$ also satisfies the
following properties, 

(iv) Given two channels $\cN_1$ and $\cN_2$, it holds that 
\begin{eqnarray}
\Omega_{1}(\cN_1\ot\cN_2)\geq \max\set{\Omega_{1}(\cN_1), \Omega_{1}(\cN_2)}.
\end{eqnarray}

(v) Given two channels $\cN_1$ and $\cN_2$, it holds that 
\begin{eqnarray}
 \Omega_{1}(\cN_1\ot\cN_2) \leq \Omega_{1}(\cN_1)+ \Omega_{1}(\cN_2).
 \end{eqnarray}
 
 \end{prop}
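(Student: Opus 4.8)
The plan is to reduce all five properties to a few structural facts about the underlying state monotone $\omega_1(\rho) = \min_{\sigma \in \mathcal{F}} \frac{1}{2}\norm{\rho-\sigma}_1$, which I would establish up front. Specifically, I would record that (a) $\omega_1$ is nonnegative and faithful, i.e.\ $\omega_1(\rho) = 0$ iff $\rho \in \mathcal{F}$, using that $\mathcal{F}$ is closed so the minimum is attained and vanishes only on $\mathcal{F}$; (b) $\omega_1$ obeys data processing under free channels, namely $\omega_1(\cM(\rho)) \leq \omega_1(\rho)$ for every $\cM \in \mathfrak{F}$, which follows by taking the closest free state $\sigma$ to $\rho$, noting $\cM(\sigma) \in \mathcal{F}$, and invoking contractivity of the trace norm under CPTP maps; and (c) $\omega_1$ is convex, since picking the closest free state to each component and forming their convex combination (which stays in $\mathcal{F}$ by convexity of $\mathcal{F}$) yields the bound via the triangle inequality.

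For property (i), nonnegativity is immediate, and $\cN \in \mathfrak{F}$ forces $\cN(\rho) \in \mathcal{F}$ for every free $\rho$, so each $\omega_1(\cN(\rho))$ vanishes by faithfulness; conversely $\Omega_1(\cN) = 0$ means $\cN(\rho) \in \mathcal{F}$ for all free $\rho$, i.e.\ $\cN$ is resource non-generating, which lands in $\mathfrak{F}$ under the stated assumption. Property (ii) splits into the two compositions: for the right factor, $\cM_2$ maps $\mathcal{F}$ into itself so the maximization domain only shrinks; for the left factor, data processing (b) gives $\omega_1(\cM_1(\cN(\rho))) \leq \omega_1(\cN(\rho))$ pointwise in $\rho$. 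Property (iii) follows by pushing the convex combination through the $\cN_i$, applying convexity (c), and then using the elementary bound $\max_\rho \sum_i p_i f_i(\rho) \leq \sum_i p_i \max_\rho f_i(\rho)$.

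Properties (iv) and (v) are where the product structure $\mathcal{F}_{AB} = \mathrm{Conv}\{\mathcal{F}_A \otimes \mathcal{F}_B\}$ does the real work, and I expect (v) to be the main obstacle. For (iv) I would probe $\cN_1 \otimes \cN_2$ with a product free state $\rho_1 \otimes \rho_2$, where $\rho_1 \in \mathcal{F}_A$ is optimal for $\Omega_1(\cN_1)$ and $\rho_2 \in \mathcal{F}_B$ is arbitrary. The product structure makes the partial trace $\trace_B$ a free operation, since it sends any $\sum_k q_k \alpha_k \otimes \beta_k \in \mathcal{F}_{AB}$ to $\sum_k q_k \alpha_k \in \mathcal{F}_A$; data processing then gives $\omega_1(\cN_1(\rho_1) \otimes \cN_2(\rho_2)) \geq \omega_1(\trace_B[\cN_1(\rho_1) \otimes \cN_2(\rho_2)]) = \omega_1(\cN_1(\rho_1)) = \Omega_1(\cN_1)$, and symmetry in the two factors handles $\cN_2$. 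For (v), I would take the optimal probe $\rho^\ast \in \mathcal{F}_{AB}$, decompose it as $\rho^\ast = \sum_k q_k\, \alpha_k \otimes \beta_k$ with $\alpha_k \in \mathcal{F}_A,\ \beta_k \in \mathcal{F}_B$, and apply convexity (c) to pull the $k$-sum outside $\omega_1$.

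The crux is then a tensor-subadditivity estimate for $\omega_1$ itself: for any states $\mu_A,\mu_B$ one has $\omega_1(\mu_A \otimes \mu_B) \leq \omega_1(\mu_A) + \omega_1(\mu_B)$. I would prove this by choosing the closest free states $\tau_A,\tau_B$ to $\mu_A,\mu_B$, noting $\tau_A \otimes \tau_B \in \mathcal{F}_{AB}$, and inserting the hybrid term $\tau_A \otimes \mu_B$ so that $\mu_A \otimes \mu_B - \tau_A \otimes \tau_B = (\mu_A - \tau_A)\otimes \mu_B + \tau_A \otimes (\mu_B - \tau_B)$; the triangle inequality, multiplicativity of the trace norm on tensor products, and $\norm{\mu_B}_1 = \norm{\tau_A}_1 = 1$ then collapse the bound to $\omega_1(\mu_A) + \omega_1(\mu_B)$. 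Applying this to $\mu_A = \cN_1(\alpha_k),\ \mu_B = \cN_2(\beta_k)$, together with $\omega_1(\cN_1(\alpha_k)) \leq \Omega_1(\cN_1)$ and $\omega_1(\cN_2(\beta_k)) \leq \Omega_1(\cN_2)$ (valid since $\alpha_k,\beta_k$ are free), finishes (v) after the convex weights $q_k$ sum to one.
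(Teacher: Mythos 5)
Your proposal is correct and follows essentially the same route as the paper's Appendix B: everything is reduced to faithfulness, free-channel monotonicity, and convexity of the state monotone $\omega_1$, with the partial-trace/data-processing trick for (iv) and the hybrid term $\tau_A\otimes\mu_B$ plus the triangle inequality for the tensor-subadditivity underlying (v). If anything you are slightly more complete than the paper: your explicit reduction of a general probe $\sum_k q_k\,\alpha_k\otimes\beta_k\in\mathrm{Conv}\{\mathcal{F}_A\otimes\mathcal{F}_B\}$ to the product case via convexity is a step the paper's written proof of (v) leaves implicit (it only treats product free states), and your direct use of $\omega_1$-monotonicity for the left composition in (ii) is simpler than the paper's detour through the minimax characterization of $\Omega_1$.
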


In fact, each of the above properties holds under weaker assumptions.
The proof for more general distance measures is provided in Appendix B of Supplemental Material.
Due to property (i), Theorem \ref{thm:1} also indicates that resource non-generating channels are effectively indistinguishable from each other by free probe states.
Due to property (iv), it is easy to define a regularized version of 
$\Omega_1(\cN)$ by
$\Omega^{\infty}_1(\cN)
=\lim_{n\to\infty}\frac{1}{n}\Omega_1(\cN^{\ot n})$, which is invariant under 
tensoring, i.e., $\Omega^{\infty}_1(\cN^{\ot 2})=\Omega^{\infty}_1(\cN)$.
However, this is not the focus of this work. 

Since $\mathcal{F}\subset \mathcal{D}(\mathcal{H})$, we have 
$ p_{\rm succ}(\cN,\mathfrak{F},  Q)\geq p_{\rm succ}(\cN,\mathfrak{F}, \mathcal{F})$. If the probe state $\rho$ is not a free state, then the resource in $\rho$ may help improve the success probability of 
discriminating the given channel $\cN$ from 
the set of free channels. 
Here we provide an upper bound on the advantage of using a resource probe state:

\begin{thm}\label{thm:2}
Given a quantum channel $\cN$, a quantum state $\rho$ and  the set of free channels $\mathfrak{F}$.  
The advantage provided by the  state $\rho$ compared 
with all free states to distinguish any given channel $\cN$ from $\mathfrak{F}$ is upper bounded by the trace-norm distance of resource:
\begin{eqnarray}
p_{\mathrm{succ}}(\cN,\mathfrak{F},\rho)
-p_{\mathrm{succ}}(\cN,\mathfrak{F}, \mathcal{F}) 
\leq \frac{1}{2}\omega_1(\rho).
\end{eqnarray}

 \end{thm}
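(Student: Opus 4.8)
The plan is to upper-bound $p_{\mathrm{succ}}(\cN,\mathfrak{F},\rho)$ by exhibiting a single conveniently chosen free channel, rather than analyzing the full minimization over $\mathfrak{F}$. By the Holevo--Helstrom theorem, $p_{\mathrm{succ}}(\cN,\mathfrak{F},\rho)=\frac{1}{2}+\frac{1}{4}\min_{\cM\in\mathfrak{F}}\norm{\cN(\rho)-\cM(\rho)}_1$, so any single $\cM\in\mathfrak{F}$ furnishes an upper bound; the task is to choose one that is sensitive to $\omega_1(\rho)$ and to $\Omega_1(\cN)$ but to nothing else.

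First I would fix two free states, which exist by closedness of $\mathcal{F}$: let $\sigma^\ast\in\mathcal{F}$ attain $\omega_1(\rho)=\frac{1}{2}\norm{\rho-\sigma^\ast}_1$, and let $\tau^\ast\in\mathcal{F}$ attain $\omega_1(\cN(\sigma^\ast))=\frac{1}{2}\norm{\cN(\sigma^\ast)-\tau^\ast}_1$. The central device is the constant ``replacement'' channel $\mathcal{R}_{\tau^\ast}\colon X\mapsto\Tr{X}\,\tau^\ast$: it sends every input to $\tau^\ast\in\mathcal{F}$ and is therefore a (resource non-generating) free channel, $\mathcal{R}_{\tau^\ast}\in\mathfrak{F}$. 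Using it as the trial channel gives $\min_{\cM\in\mathfrak{F}}\norm{\cN(\rho)-\cM(\rho)}_1\le\norm{\cN(\rho)-\mathcal{R}_{\tau^\ast}(\rho)}_1=\norm{\cN(\rho)-\tau^\ast}_1$.

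Then I would apply the triangle inequality and contractivity of the trace norm exactly once: $\norm{\cN(\rho)-\tau^\ast}_1\le\norm{\cN(\rho)-\cN(\sigma^\ast)}_1+\norm{\cN(\sigma^\ast)-\tau^\ast}_1\le\norm{\rho-\sigma^\ast}_1+2\,\omega_1(\cN(\sigma^\ast))$, using the data processing inequality for $\cN$ in the first term and the definition of $\tau^\ast$ in the second. Since $\sigma^\ast\in\mathcal{F}$, the definition of the generating power gives $\omega_1(\cN(\sigma^\ast))\le\max_{\varsigma\in\mathcal{F}}\omega_1(\cN(\varsigma))=\Omega_1(\cN)$. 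Collecting the factors, $p_{\mathrm{succ}}(\cN,\mathfrak{F},\rho)\le\frac{1}{2}+\frac{1}{4}\norm{\rho-\sigma^\ast}_1+\frac{1}{2}\Omega_1(\cN)=\frac{1}{2}\omega_1(\rho)+\big(\frac{1}{2}+\frac{1}{2}\Omega_1(\cN)\big)$, and the parenthesized quantity equals $p_{\mathrm{succ}}(\cN,\mathfrak{F},\mathcal{F})$ by Theorem~\ref{thm:1}, which is exactly the claim.

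The main obstacle is a spurious factor of two, and the whole point is to avoid it. The naive route fixes the free channel $\cM^\ast$ that is optimal for $\sigma^\ast$ and inserts $\cM^\ast(\sigma^\ast)$ into the triangle inequality; this forces contractivity to be invoked twice (once for $\cN$ and once for $\cM^\ast$), producing a middle contribution of size $2\norm{\rho-\sigma^\ast}_1$ and hence only the weaker bound $\omega_1(\rho)$. The loss is genuine: the difference $\cN-\cM$ can amplify the trace norm by nearly $2$ on trace-zero Hermitian inputs (take $\cN=\mathrm{id}$ and $\cM$ a unitary flip). Replacing $\cM$ by the constant channel $\mathcal{R}_{\tau^\ast}$ removes the second contractivity term entirely, because $\mathcal{R}_{\tau^\ast}(\rho)=\mathcal{R}_{\tau^\ast}(\sigma^\ast)=\tau^\ast$ carries no dependence on the probe state; this single substitution is what restores the sharp constant $\frac{1}{2}$.
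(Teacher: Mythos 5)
Your proof is correct and reaches the sharp constant $\tfrac{1}{2}$, but it packages the argument differently from the paper. The paper's proof (Appendix C) first bounds $\frac{1}{2}\min_{\cM\in\mathfrak{F}}\norm{\cN(\rho)-\cM(\rho)}_1\leq\omega_{1}(\cN(\rho))$ --- implicitly via the constant channel onto the free state closest to $\cN(\rho)$ --- and then finishes in one line with $\omega_{1}(\cN(\rho))-\omega_{1}(\rho)\leq\widetilde{\Omega}_{1}(\cN)$, which is just the definition of the resource increasing power, combined with $\widetilde{\Omega}_{1}=\Omega_{1}$ (Proposition \ref{prop:g=i}) and Theorem \ref{thm:1}. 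You instead anchor at $\tau^{\ast}$, the free state closest to $\cN(\sigma^{\ast})$ where $\sigma^{\ast}$ is the free state closest to $\rho$, and carry out the triangle-inequality-plus-data-processing step by hand; that step is precisely the content of the proof of Lemma \ref{lem:inc}, so you are effectively unrolling Proposition \ref{prop:g=i} rather than citing it. The two routes rest on the same implicit assumption, namely that the constant channel onto a free state belongs to $\mathfrak{F}$ (the paper uses this in the construction of $\cN_{\rho}$ in Lemma \ref{lem:5}); it is worth noting explicitly that this is an assumption on $\mathfrak{F}$ beyond ``maps free states to free states.'' The paper's packaging yields the slightly tighter intermediate bound $\frac{1}{2}(\omega_{1}(\cN(\rho))-\widetilde{\Omega}_{1}(\cN))$, while yours makes transparent exactly where the constant $\tfrac{1}{2}$ rather than $1$ comes from; your closing remark about the factor-of-two loss incurred by the naive choice of trial channel is a correct and genuinely useful observation.
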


The proof is presented in Appendix C of Supplemental Material.
A direct corollary is the following bound on the success probability of discriminating $\cN$ from free channels by any probe state $\rho$:
\begin{cor}
Given a quantum channel $\cN$, a quantum state $\rho$ and  the set of free channels $\mathfrak{F}$,
 the success probability $p_{\mathrm{succ}}(\cN, \mathfrak{F}, \rho )$
is upper bounded by 
\begin{eqnarray}
p_{\mathrm{succ}}(\cN, \mathfrak{F}, \rho )
\leq \frac{1}{2}+\frac{1}{2}\Omega_1(\cN)+\frac{1}{2}\omega_1(\rho).
\end{eqnarray}
\end{cor}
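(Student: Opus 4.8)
The plan is to obtain the bound as a direct consequence of the two preceding theorems, so the argument reduces to a short chain of inequalities with no new technical input. First I would invoke Theorem~\ref{thm:2}, which controls the discrimination advantage of an arbitrary probe state $\rho$ over free probes by the trace-norm resource measure:
\[
p_{\mathrm{succ}}(\cN,\mathfrak{F},\rho) - p_{\mathrm{succ}}(\cN,\mathfrak{F},\mathcal{F}) \leq \frac{1}{2}\omega_1(\rho).
\]
Rearranging this gives $p_{\mathrm{succ}}(\cN,\mathfrak{F},\rho) \leq p_{\mathrm{succ}}(\cN,\mathfrak{F},\mathcal{F}) + \frac{1}{2}\omega_1(\rho)$, which expresses the absolute success probability in terms of the free-probe baseline plus the resource content of $\rho$.

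The second step is to eliminate the baseline term $p_{\mathrm{succ}}(\cN,\mathfrak{F},\mathcal{F})$ using the exact characterization furnished by Theorem~\ref{thm:1}, namely $p_{\mathrm{succ}}(\cN,\mathfrak{F},\mathcal{F}) = \frac{1}{2} + \frac{1}{2}\Omega_1(\cN)$. Substituting this identity into the inequality above yields
\[
p_{\mathrm{succ}}(\cN,\mathfrak{F},\rho) \leq \frac{1}{2} + \frac{1}{2}\Omega_1(\cN) + \frac{1}{2}\omega_1(\rho),
\]
which is exactly the claimed bound.

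Since both ingredients are already established, there is essentially no obstacle to overcome; the only point to verify is that the two statements compose cleanly, which they do precisely because Theorem~\ref{thm:1} is an equality rather than an inequality and can therefore be substituted into the estimate of Theorem~\ref{thm:2} without any loss. Conceptually, the corollary simply decomposes the maximal distinguishing power achievable with probe $\rho$ into two contributions: the intrinsic resource-generating power $\Omega_1(\cN)$ of the channel, which is accessible even to free probes, and an additional term bounded by the resource $\omega_1(\rho)$ carried by the probe state itself.
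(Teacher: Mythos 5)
Your proposal is correct and is exactly the derivation the paper intends: the statement is presented as ``a direct corollary'' of Theorems \ref{thm:1} and \ref{thm:2}, obtained by rearranging the inequality of Theorem \ref{thm:2} and substituting the equality $p_{\mathrm{succ}}(\cN,\mathfrak{F},\mathcal{F}) = \frac{1}{2} + \frac{1}{2}\Omega_1(\cN)$ from Theorem \ref{thm:1}. No gaps.
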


\section{Example}

As an application of the above general framework, we now focus on quantum coherence, a prominent quantum feature emerging from the superposition principle of quantum mechanics. Coherence represents a key quantum resource which has a variety of applications in quantum information science, including quantum metrology \cite{Giovannetti2011}, thermodynamics \cite{Lostaglio2015,Lostaglio2015NC} and biology \cite{Plenio08, Levi14}.    In recent years, the resource theory  of coherence has drawn a lot of attention, where the manipulation and characterization of coherence in quantum states are thoroughly investigated (see \cite{RevModPhys.89.041003,Hu2018} for a review).  Now we extend the study to quantum channels following the idea in the last section, that is, to characterize the coherence value of a channel by its distinguishability from the typical sets of coherence-free channels.

Given a fixed basis $\set{\ket{i}}^{d-1}_{i=0}$ for a $d$-dimensional system, any quantum state
which is diagonal in the reference basis is called an incoherent
state and is a free state in the resource theory of coherence.
The set of incoherent states is denoted by $\cI$. Let $\Delta$ denote the fully dephasing
channel in the given basis, which is defined as
$\Delta(\rho)=\sum_i\Innerm{i}{\rho}{i}\proj{i}$.  $\Delta$ is a prominent example of the 
resource destroying map \cite{Liu2017}.

There are several individually motivated choices of free operations in the resource theory of coherence.  The following four, which collectively emerge from the relations with $\Delta$ and can be broadly generalized via the theory of resource destroying map \cite{Liu2017}, are considered most important: 
(1) maximally incoherent operations (MIO) \cite{Chitambar2016b}, the maximum possible set of coherence-free operations that contains all quantum operations
$\cM$ that maps incoherent states to incoherent states, i.e., $\cM(\cI)\subset\cI$; 
(2) incoherent operations (IO) \cite{Baumgratz2014}, containing $\cM$ that
admit a set of Kraus operators $\set{K_i}$ such that $\cM(\cdot)=\sum_iK_i(\cdot)K^\dag_i$ and
$K_i\cI K^\dag_i\subset\cI$ for any $i$;
(3) dephasing-covariant operations (DIO) \cite{Chitambar2016b,Liu2017}, containing $\cM$ such that
$[\Delta,\cM]=0$ 
(4) strictly incoherent operations (SIO) \cite{Chitambar2016a, Chitambar2016b}, containing all 
$\cM$ admitting a set of Kraus operators $\set{K_i}$ such that
$\Delta (K_i\rho K^\dag_i)=K_i\Delta(\rho) K^\dag_i$ for any $i$ and any quantum state $\rho$.

Several operational motived coherence measures have been introduced and here we consider the 
coherence measure defined by $l_1$-norm distance \cite{Baumgratz2014}, trace-norm distance \cite{Shao2015} and 
robustness \cite{Piani2016},
\begin{eqnarray}
C_{l_1}(\rho):&=&\min_{\sigma\in \cI}
\norm{\rho-\sigma}_{l_1},\\
C_1(\rho)
:&=&\frac{1}{2}\min_{\sigma\in \cI}
\norm{\rho-\sigma}_1,\\
C_R(\rho)&=&\min\set{t\geq 0 :  \rho+t\sigma\in\cI, \sigma\in \mathcal{D}(\cH)}.
\end{eqnarray}
In fact,  in single-qubit system $\complex^2$,
the trace-norm of coherence $C_1$ is equal to 
$l_1$-norm of coherence $C_{l_1}$ \cite{Rana2016pra,Shao2015} and 
the  robustness of coherence $C_R$ \cite{Piani2016} up to a scalar 2.

In the resource theory of coherence, certain coherence generating power  can also be used to characterize the cost of simulating the given channel 
by incoherent operations \cite{ Bu2017PLA,Diaz2018} and the capacity of a channel to generate maximally coherent states \cite{Dana2017}. Besides,  the ability of  a quantum channel to detect non-classicality has also been introduced to 
quantify the resource of channels in terms of trace distance \cite{Theurer2018} and relative entropy \cite{Theurer2018,Yuan2018}.

First, it follows from Theorem \ref{thm:1} that the success probability of distinguishing $\cN$ from the set of free operations $\mathfrak{I}$, where $\mathfrak{I}$ 
can be any of $\{SIO,IO,DIO,MIO\}$,  is universally determined by the trace-norm coherence generating power.

\begin{prop}\label{prop:coh1}
Given a quantum channel $\cN$ and the set of coherence-free operations $\mathfrak{I}\in\{SIO,IO,DIO,MIO\}$,
the maximum success probability of distinguishing $\cN$ from $\mathfrak{I}$ by incoherent states is
\begin{eqnarray}
  p_{\rm succ}(\cN,\mathfrak{I},\cI)  =\frac{1}{2}+\frac{1}{2}\widetilde{\mathcal{C}}_{1}(\cN)
  =\frac{1}{2}+\frac{1}{2}\mathcal{C}_{1}(\cN).
\end{eqnarray}

\end{prop}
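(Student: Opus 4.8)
The plan is to obtain this as a direct specialization of Theorem~\ref{thm:1} to the coherence theory, taking the free states to be the incoherent states $\mathcal{F} = \cI$, the free channels to be $\mathfrak{F} = \mathfrak{I}$, and the underlying state monotone to be the trace-norm coherence $C_1$. Under this identification $\Omega_1(\cN)$ becomes the coherence generating power $\mathcal{C}_1(\cN) = \max_{\rho \in \cI} C_1(\cN(\rho))$ and $\widetilde{\Omega}_1(\cN)$ becomes $\widetilde{\mathcal{C}}_1(\cN)$. The second equality in the statement is then immediate from Proposition~\ref{prop:g=i}, since $C_1$ is built from the trace distance, which satisfies both the triangle inequality and the data-processing inequality.

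First I would check that each of the four classes is a legitimate set of free channels in the sense required by Theorem~\ref{thm:1}, i.e.\ that every $\cM \in \mathfrak{I}$ obeys $\cM(\cI) \subset \cI$. For MIO this is the definition; for IO it follows from $K_i \cI K_i^\dagger \subset \cI$ applied to each Kraus operator; for DIO it follows by applying $[\Delta,\cM]=0$ to an incoherent $\rho = \Delta(\rho)$, giving $\Delta(\cM(\rho)) = \cM(\rho)$; and SIO $\subset$ IO covers the last case. Hence $SIO \subset IO \subset MIO$ and $SIO \subset DIO \subset MIO$, with MIO being exactly the coherence-non-generating maps.

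The crux is to show that the inner minimization $\min_{\cM \in \mathfrak{I}}\norm{\cN(\rho) - \cM(\rho)}_1$ takes the \emph{same} value for all four classes, which is what makes the result universal. For the lower bound, any $\cM \in \mathfrak{I}$ sends an incoherent probe $\rho$ to an incoherent output, so $\min_{\cM \in \mathfrak{I}}\norm{\cN(\rho) - \cM(\rho)}_1 \geq \min_{\sigma \in \cI}\norm{\cN(\rho) - \sigma}_1 = 2\,C_1(\cN(\rho))$. For the matching upper bound I would exhibit, for every target $\sigma = \sum_j q_j \proj{j} \in \cI$, the replacement channel $\cM_\sigma(\cdot) = \sigma\,\mathrm{Tr}(\cdot)$ with Kraus operators $K_{ij} = \sqrt{q_j}\,\out{j}{i}$, and verify directly that it satisfies the SIO commutation condition $\Delta(K_{ij}\rho K_{ij}^\dagger) = K_{ij}\Delta(\rho)K_{ij}^\dagger$. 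Since SIO is contained in all four classes, this channel is available in each and attains output $\sigma$, so $\min_{\cM \in \mathfrak{I}}\norm{\cN(\rho) - \cM(\rho)}_1 \leq \min_{\sigma \in \cI}\norm{\cN(\rho) - \sigma}_1$, matching the lower bound.

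Combining the two bounds yields $p_{\rm succ}(\cN, \mathfrak{I}, \rho) = \frac{1}{2} + \frac{1}{2}C_1(\cN(\rho))$ for every incoherent $\rho$ and every $\mathfrak{I} \in \{SIO,IO,DIO,MIO\}$; maximizing over $\rho \in \cI$ gives $\frac{1}{2} + \frac{1}{2}\mathcal{C}_1(\cN)$, and the equality with $\widetilde{\mathcal{C}}_1$ follows again from Proposition~\ref{prop:g=i}. I expect the main obstacle to be the upper-bound step, specifically confirming that the replacement map lies in the most restrictive class SIO rather than only in MIO; once that membership is secured, universality across the four classes is automatic from the inclusion chain.
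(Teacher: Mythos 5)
Your proposal is correct and follows essentially the same route as the paper: Proposition~\ref{prop:coh1} is obtained by specializing Theorem~\ref{thm:1} (whose proof, via Lemma~\ref{lem:5}, rests on exactly the replacement-channel construction you use) together with Proposition~\ref{prop:g=i}. The only difference is that you explicitly verify the replacement channel $\cM_\sigma$ lies in SIO, a detail the paper leaves as ``easy to verify''; this is a worthwhile check since it is precisely what makes the result uniform over all four classes.
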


Again, the result indicates that channels in MIO are mutually indistinguishable by incoherent states since $C_1(\cN) = 0$.
Therefore, the task of discriminating a channel from coherence-free ones gives an operational interpretation for the coherence generating power. Compared  with
 \cite{BuPRL2017} and \cite{Napoli2016}, which only consider the effect of coherence in the probe states in 
 channel discrimination, the results here reveal the roles of coherence in quantum channels in this task.

Since  the trace-norm of coherence $C_1\leq 1-1/d$ \cite{Chen2016,Yu2016}, the success probability 
$p_{\text{succ}}(\cN, \mathfrak{I}, \cI)\leq 1-1/(2d)$.  
For example, for the 
Hadamard gate $H$ on single-qubit system $\complex^2$, we have 
$p_{\rm succ}(H, \mathfrak{I}, \cI)= 3/4$, which follows from the fact that 
$\mathcal{C}_1(H)=1/2$ (see Appendix A of Supplemental Material for the calculation of  $\mathcal{C}_1$ in single-qubit
 system).
Due to the equivalence between trace-norm distance and robustness of coherence,
it may be expected that this theorem can be experimentally testified in a future work,
as the robustness of coherence can be measured  in experiment \cite{Wang2017,ZhengPRL2018}.

Obviously, 
$p_{\text{succ}}(\cN, \mathfrak{I}, Q)\geq p_{\text{succ}}(\cN, \mathfrak{I}, \cI)$ for any quantum channel. 
There exists some quantum channel $\cN$ such that 
the inequality is strict,
which shows that 
the resource of probe states is useful for distinguishing 
the given channel from the set of free operations. 

\begin{prop}\label{prop:imp}
For $\mathfrak{I}\in \set{SIO, IO}$, there exists some quantum channel $\cN$ such that 
\begin{eqnarray}
p_{\rm succ}(\cN, \mathfrak{I}, Q)>p_{\rm succ}(\cN, \mathfrak{I}, \cI).
\end{eqnarray}
\end{prop}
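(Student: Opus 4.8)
The plan is to prove this existence statement by exhibiting a single explicit channel $\cN$ together with a coherent probe state $\rho$ for which $\rho$ strictly outperforms every incoherent probe. By the Holevo--Helstrom theorem, $p_{\rm succ}(\cN,\mathfrak{I},\rho)=\frac{1}{2}+\frac{1}{4}\min_{\cM\in\mathfrak{I}}\norm{\cN(\rho)-\cM(\rho)}_1$, while Proposition~\ref{prop:coh1} gives $p_{\rm succ}(\cN,\mathfrak{I},\cI)=\frac{1}{2}+\frac{1}{2}\mathcal{C}_1(\cN)$. Hence it suffices to produce $\cN$ and a coherent $\rho$ with
\begin{equation}
\min_{\cM\in\mathfrak{I}}\norm{\cN(\rho)-\cM(\rho)}_1 \;>\; 2\,\mathcal{C}_1(\cN).
\end{equation}
The right-hand side is the amount of coherence $\cN$ can create from its best incoherent input, so the task is to find a channel whose action on a coherent probe cannot be mimicked by \emph{any} free channel more closely than this threshold.

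The cleanest realization I would pursue chooses $\cN$ to be resource non-generating on incoherent inputs, i.e.\ $\cN(\cI)\subseteq\cI$, so that $\mathcal{C}_1(\cN)=\max_{\sigma\in\cI}C_1(\cN(\sigma))=0$ and the threshold collapses to $0$, giving $p_{\rm succ}(\cN,\mathfrak{I},\cI)=\frac{1}{2}$. It then remains only to find a coherent probe $\rho$ with $\cN(\rho)\ne\cM(\rho)$ for every $\cM\in\mathfrak{I}$. I would arrange this by choosing $\cN$ to \emph{increase} the $l_1$-coherence of some coherent $\rho$, that is $C_{l_1}(\cN(\rho))>C_{l_1}(\rho)$. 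Since $C_{l_1}$ is a monotone under both SIO and IO, every $\cM\in\mathfrak{I}$ satisfies $C_{l_1}(\cM(\rho))\le C_{l_1}(\rho)<C_{l_1}(\cN(\rho))$. As $\mathfrak{I}$ is compact and $\cM\mapsto\norm{\cN(\rho)-\cM(\rho)}_1$ is continuous, the minimum is attained at some $\cM^{\ast}\in\mathfrak{I}$; because $C_{l_1}(\cM^{\ast}(\rho))\le C_{l_1}(\rho)<C_{l_1}(\cN(\rho))$ we have $\cM^{\ast}(\rho)\ne\cN(\rho)$, so the minimum is strictly positive. This yields $p_{\rm succ}(\cN,\mathfrak{I},\rho)>\frac{1}{2}=p_{\rm succ}(\cN,\mathfrak{I},\cI)$, and hence $p_{\rm succ}(\cN,\mathfrak{I},Q)>p_{\rm succ}(\cN,\mathfrak{I},\cI)$.

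The concrete construction is where the real work lies: I must write down an explicit CPTP map $\cN$ that sends every incoherent state to an incoherent state yet strictly increases $C_{l_1}$ on some coherent input. This is possible precisely because $C_{l_1}$ fails to be monotone under the larger classes MIO and DIO, which is also the structural reason the proposition is restricted to SIO and IO. A single-qubit Bloch analysis indicates that the equatorial block of any such map is a contraction, so the example is most naturally realized in dimension $d\ge 3$ (for instance a suitable mixed-permutation/dephasing-type map that redistributes populations so as to concentrate off-diagonal weight). I would then verify $\cN(\cI)\subseteq\cI$ directly and exhibit one coherent $\rho$ with $C_{l_1}(\cN(\rho))>C_{l_1}(\rho)$ by explicit computation.

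The main obstacle is twofold. First, the explicit engineering of $\cN$: one needs a channel in the gap between MIO/DIO and SIO/IO that provably increases $C_{l_1}$, together with a certificate that $\cN(\cI)\subseteq\cI$. Second, and more subtly, the lower bound must rely on a monotone strictly finer than the trace distance: because Proposition~\ref{prop:g=i} gives $C_1(\cN(\rho))-C_1(\rho)\le\widetilde{\mathcal{C}}_1(\cN)=\mathcal{C}_1(\cN)$ for every $\rho$, any estimate of $\min_{\cM\in\mathfrak{I}}\norm{\cN(\rho)-\cM(\rho)}_1$ built from the trace-norm coherence $C_1$ alone can never exceed the incoherent-probe value $2\mathcal{C}_1(\cN)$. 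Invoking the genuinely SIO/IO-specific monotone $C_{l_1}$ (equivalently, exploiting the incoherent Kraus structure of SIO/IO to construct a linear witness $W$ with $\Tr{W\cM(\rho)}$ uniformly bounded over $\cM\in\mathfrak{I}$) is exactly what breaks this barrier, and is the crux of the argument.
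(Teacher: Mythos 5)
Your overall strategy is the same as the paper's: pick a channel $\cN$ that maps incoherent states to incoherent states (so $\mathcal{C}_1(\cN)=0$ and $p_{\rm succ}(\cN,\mathfrak{I},\cI)=1/2$ by Proposition~\ref{prop:coh1}) but whose action on some coherent probe $\rho$ cannot be reproduced by any channel in $IO$, then pass to $SIO$ via $SIO\subset IO$. The paper discharges the existential step by citing a known separation result: there is $\cN_*\in MIO\setminus IO$ together with a single state $\rho$ such that $\cN_*(\rho)\neq\cM(\rho)$ for all $\cM\in IO$ \cite{Bu2016QIC}. Your device for certifying the uniform gap --- a monotone such as $C_{l_1}$ that is non-increasing under $IO$/$SIO$ but strictly increased by $\cN$ on $\rho$ --- is actually a cleaner route to $\min_{\cM\in IO}\norm{\cN(\rho)-\cM(\rho)}_1>0$ than the paper's, since the Lipschitz continuity of $C_{l_1}$ in trace norm gives a lower bound that is uniform over $\cM\in IO$ without any appeal to compactness of $IO$ (the paper's own wording, with a $\max$ where a $\min$ is needed, glosses over exactly this point).

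The genuine gap is that you never actually produce the one object the whole argument rests on: an explicit CPTP map with $\cN(\cI)\subseteq\cI$ and $C_{l_1}(\cN(\rho))>C_{l_1}(\rho)$ for some $\rho$. You correctly identify this as ``where the real work lies'' and sketch what it should look like (a $d\ge 3$ map exploiting the failure of $C_{l_1}$-monotonicity under $MIO$), but a sketch of the search space is not a construction, and the existence of such a channel is precisely the nontrivial content of the proposition --- it is equivalent to the strict separation $MIO\supsetneq IO$ witnessed at a single state. To close the proof you must either write down the channel and verify both properties by computation, or cite a source that does (the paper uses \cite{Bu2016QIC}; explicit qutrit examples of $MIO$ maps increasing $C_{l_1}$ are known in the literature). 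As it stands the proposal is an accurate reduction of the proposition to a known-but-unproved existence claim, not a complete proof.
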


The proof is presented in Appendix D of Supplemental Material.
The above result shows that  resource of probe states is useful for improving the success probability of  distinguishing 
the given channel from the set of free operations $\mathfrak{I}\in \set{SIO, IO}$.
However, whether the similar result holds   for
MIO or DIO is unknown.

By applying Theorem \ref{thm:2} to the resource theory of coherence, we obtain the following upper bound 
on the success probability when we choose a coherent state as the probe state.
 
 \begin{prop}
 
Given  a set of free operations $\mathfrak{I}\in\{ SIO,IO,DIO,MIO\}$ and a probe state $\rho$. For any quantum channel $\cN$, we have
\begin{eqnarray}
p_{\rm succ}(\cN,\mathfrak{I},\rho)
-p_{\rm succ}(\cN,\mathfrak{I}, \cI)
\leq \frac{1}{2}C_1(\rho).
\end{eqnarray}

 \end{prop}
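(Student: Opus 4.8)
The plan is to obtain this proposition as a direct specialization of Theorem~\ref{thm:2} to the resource theory of coherence. Under the general framework of the main results, one identifies the set of free states $\mathcal{F}$ with the set of incoherent states $\cI$, the set of free channels $\mathfrak{F}$ with the chosen set of coherence-free operations $\mathfrak{I}$, and the trace-norm resource measure $\omega_1$ with the trace-norm coherence measure $C_1$. Indeed, once $\mathcal{F}=\cI$, the definition $\omega_1(\rho)=\min_{\sigma\in\cI}\frac{1}{2}\norm{\rho-\sigma}_1$ coincides verbatim with $C_1(\rho)$, so it suffices to check that the hypotheses of Theorem~\ref{thm:2} are met for each $\mathfrak{I}\in\{SIO,IO,DIO,MIO\}$.

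First I would verify that $\cI$ satisfies the structural assumptions imposed on $\mathcal{F}$: the incoherent states form a non-empty (it contains, e.g., $\proj{0}$), convex, and closed subset of $\mathcal{D}(\cH)$, since being diagonal in the fixed basis $\set{\ket{i}}$ is preserved under convex combinations and under limits. Next I would confirm that each of the four operation sets is a legitimate set of free channels in the sense required by the main results, namely that every $\cM\in\mathfrak{I}$ maps incoherent states to incoherent states, i.e.\ $\mathfrak{I}\subseteq\mathrm{MIO}$. For MIO this holds by definition. For IO the Kraus condition $K_i\cI K_i^\dagger\subseteq\cI$ gives $\cM(\sigma)=\sum_iK_i\sigma K_i^\dagger\in\cI$ for any $\sigma\in\cI$. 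For DIO the commutation $[\Delta,\cM]=0$ yields $\cM(\sigma)=\cM(\Delta(\sigma))=\Delta(\cM(\sigma))\in\cI$ whenever $\sigma=\Delta(\sigma)$. For SIO the defining relation forces each $K_i\sigma K_i^\dagger=\Delta(K_i\sigma K_i^\dagger)$ to be diagonal for incoherent $\sigma$, so again $\cM(\cI)\subseteq\cI$.

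With these identifications in place, the conclusion of Theorem~\ref{thm:2},
\[
p_{\mathrm{succ}}(\cN,\mathfrak{F},\rho)-p_{\mathrm{succ}}(\cN,\mathfrak{F},\mathcal{F})\leq\tfrac{1}{2}\omega_1(\rho),
\]
specializes to exactly
\[
p_{\rm succ}(\cN,\mathfrak{I},\rho)-p_{\rm succ}(\cN,\mathfrak{I},\cI)\leq\tfrac{1}{2}C_1(\rho),
\]
which is the desired bound. I do not expect any serious obstacle here: the only point requiring genuine care is the verification that each $\mathfrak{I}$ respects the free-state structure (i.e.\ $\mathfrak{I}\subseteq\mathrm{MIO}$), after which the inequality is inherited directly from the general theorem. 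Since the trace distance satisfies the triangle inequality and the data-processing inequality, all ingredients underlying Theorem~\ref{thm:2} (including the equivalence $\widetilde\Omega_1=\Omega_1$ from Proposition~\ref{prop:g=i}) are available in this setting, so no additional argument is needed beyond this bookkeeping.
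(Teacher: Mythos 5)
Your proposal is correct and matches the paper's approach exactly: the paper presents this proposition as a direct application of Theorem~\ref{thm:2} with $\mathcal{F}=\cI$, $\mathfrak{F}=\mathfrak{I}$, and $\omega_1=C_1$, giving no further argument. Your additional verification that each $\mathfrak{I}\in\{SIO,IO,DIO,MIO\}$ maps incoherent states to incoherent states is the right (and only) bookkeeping needed.
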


 If we restrict the measurement in the channel discrimination 
 to be an incoherent POVM, i.e., diagonal in the given basis $\set{\ket{i}}_i$, then 
  the success probability 
 to distinguish the given two channels by a
 probe state $\rho$ is 
\begin{align}
 & p^{I}_{succ}(\cN,\cM,\rho)\nonumber\\
   =&\max_{\substack{\{\Pi, \mathbb{I}-\Pi\}\\ \text{diagonal}}}\left\{\frac{1}{2}\Tr{\cN(\rho)\Pi}+\frac{1}{2}\Tr{\cM(\rho)(\mathbb{I}-\Pi)}\right\}.
\end{align}

In this case,  the success probability of distinguishing the given channel $\cN$ from the 
 set of free operation $\mathfrak{I}\in\{SIO,IO,DIO,MIO\}$
is equal to the probability of random guessing.

\begin{thm}\label{thm:coh_f}
 Given a quantum channel $\cN$ and the set of  free operations $\mathfrak{I}\in\{SIO,IO,DIO,MIO\}$, 
 then the success probability by incoherent POVM is 
 \begin{eqnarray}
 p^{I}_{\rm succ}(\cN, \mathfrak{I}, \rho)=\frac{1}{2},
 \end{eqnarray}
 for any $\rho\in \cD(\cH)$.
 \end{thm}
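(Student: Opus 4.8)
The plan is to exploit the fact that an incoherent (diagonal) measurement only sees the diagonal part of the output states, which reduces the problem to discriminating dephased states. First I would observe that for any diagonal POVM element $\Pi$ and any operator $A$ one has $\Tr{A\Pi} = \Tr{\Delta(A)\Pi}$, since $\Pi$ is diagonal in the reference basis. Applying this to $A = \cN(\rho)$ and $A = \cM(\rho)$, the objective in the definition of $p^{I}_{\rm succ}(\cN,\cM,\rho)$ becomes $\frac{1}{2}\Tr{\Delta(\cN(\rho))\Pi} + \frac{1}{2}\Tr{\Delta(\cM(\rho))(\mathbb{I}-\Pi)}$. Because $\Delta(\cN(\rho))$ and $\Delta(\cM(\rho))$ are both diagonal, their difference is diagonal, so the projector onto its positive part---the optimal Holevo--Helstrom measurement---is itself diagonal and hence admissible. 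The restriction to incoherent POVMs therefore costs nothing, giving $p^{I}_{\rm succ}(\cN,\cM,\rho) = \frac{1}{2} + \frac{1}{4}\norm{\Delta(\cN(\rho)) - \Delta(\cM(\rho))}_1$.

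With this formula in hand, the lower bound $p^{I}_{\rm succ}(\cN,\mathfrak{I},\rho) \geq \frac{1}{2}$ is immediate, since the trace norm is nonnegative (equivalently, the trivial choice $\Pi=\mathbb{I}$ already attains $\frac{1}{2}$). It then remains to produce, for the fixed probe $\rho$, a free channel $\cM \in \mathfrak{I}$ whose dephased output matches that of $\cN$, i.e.\ $\Delta(\cM(\rho)) = \Delta(\cN(\rho))$; the trace-norm term then vanishes and the minimization over $\mathfrak{I}$ yields exactly $\frac{1}{2}$.

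The natural candidate is the constant ``replacement'' channel $\cM_\tau(\cdot) := \tau\,\Tr{\cdot}$ that discards its input and prepares the incoherent state $\tau := \Delta(\cN(\rho))$. By construction $\Delta(\cM_\tau(\rho)) = \tau = \Delta(\cN(\rho))$, so the matching condition holds. I would then verify that $\cM_\tau$ lies in each of the four classes: writing $\tau = \sum_k q_k \proj{k}$ and using Kraus operators $K_{kj} = \sqrt{q_k}\,\out{k}{j}$, each $K_{kj}$ sends any incoherent state to a multiple of $\proj{k}$, so $\cM_\tau \in \mathrm{IO}$ and a fortiori $\cM_\tau \in \mathrm{MIO}$; a direct check gives $\Delta(K_{kj}\rho K_{kj}^\dag) = K_{kj}\Delta(\rho) K_{kj}^\dag$ (both equal $q_k \rho_{jj}\proj{k}$), so $\cM_\tau \in \mathrm{SIO}$; and $[\Delta,\cM_\tau]=0$ since $\Delta(\cM_\tau(X)) = \tau\,\Tr{X} = \cM_\tau(\Delta(X))$, so $\cM_\tau \in \mathrm{DIO}$.

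The only point requiring care---and the main (mild) obstacle---is that no single free channel can satisfy $\Delta\circ\cM = \Delta\circ\cN$ as maps when $\mathfrak{I}\in\{\mathrm{DIO},\mathrm{SIO}\}$, because dephasing-covariance forces $\Delta(\cM(\rho))$ to depend only on the diagonal of $\rho$, whereas $\Delta(\cN(\rho))$ generally depends on the coherences of $\rho$. This is precisely why the matching channel $\cM_\tau$ is permitted to depend on the fixed probe $\rho$ through $\tau$, which is legitimate since the minimization over $\mathfrak{I}$ in $p^{I}_{\rm succ}(\cN,\mathfrak{I},\rho)$ is carried out for each $\rho$ separately. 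Combining the lower bound with the value $\frac{1}{2}$ attained by $\cM_\tau$ then completes the argument uniformly for all four choices of $\mathfrak{I}$.
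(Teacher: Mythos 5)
Your proof is correct and follows essentially the same route as the paper: transfer the dephasing from the diagonal POVM onto the output states via $\Tr{A\Pi}=\Tr{\Delta(A)\Pi}$, then observe that the minimum over free channels vanishes because $\Delta(\cN(\rho))$ is incoherent and can be matched by a free replacement channel. The only difference is that you spell out explicitly that the replacement channel $\cM_\tau$ belongs to all four classes SIO, IO, DIO, MIO (a point the paper leaves implicit in the line $\min_{\cM\in\mathfrak{J}}\norm{\Delta^{\dag}\circ\cN(\rho)-\Delta^{\dag}\circ\cM(\rho)}_{1}=2\,C_{1}(\Delta^{\dag}\circ\cN(\rho))=0$), which is a worthwhile addition but not a different argument.
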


The proof is provided in Appendix E of Supplemental Material. Therefore, the restriction of incoherent POVM will eliminate the advantage 
 provided by the coherence of state and channel 
 in the task of  channel discrimination.
Note that Ref.~\cite{Theurer2018} considers a slightly different scenario (for example, the order of taking minimization over channels and maximization over states is different and the set of free operations there is consisted of detection-incoherent operations, which is different from those we consider), where, in contrast, it is possible to distinguish a channel from free ones with probability greater than $1/2$ even by free measurements.
 Moreover, the coherence feature of channels and its quantification that Ref.~\cite{Theurer2018} studies rely on the resource destroying map (the fully dephasing channel), but our approach does not.

 \smallskip
The general results Theorem \ref{thm:1} and \ref{thm:2} can also be applied to 
other resource theories, such as 
entanglement, magic states and so on.
For instance, in the resource theory of bipartite entanglement, 
the free states are separable states, and the free operations are
typically chosen to be Local Operations and Classical Communication (LOCC), 
or Separable operations (SEP)---the maximal set of entanglement non-generating operations. Then we have
\begin{prop}
Given  the set of free operations   $\mathfrak{I}\in\{LOCC, SEP\}$ and a probe state $\rho_{AB}\in \cD(\cH_A\ot \cH_B)$. For any quantum channel $\cN$, we have
\begin{eqnarray}
p_{\rm succ}(\cN,\mathfrak{I},\rho)
-p_{\rm succ}(\cN,\mathfrak{I}, \cI)
\leq \frac{1}{2}E_1(\rho_{AB}),
\end{eqnarray}
where $E_1(\rho_{AB}):=\min_{\sigma\in Sep(A:B)}\norm{\rho_{AB}-\sigma}_{1}$ and
$Sep(A:B) $ denotes the set of separable states on $\cH_A\ot\cH_B$.
\end{prop}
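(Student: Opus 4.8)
The plan is to recognize that the final Proposition is simply a direct instantiation of the general Theorem~\ref{thm:2} to the resource theory of bipartite entanglement, so the entire work lies in verifying that the abstract hypotheses of Theorem~\ref{thm:2} are met by this specific theory. Accordingly, I would first fix the identifications: the set of free states $\mathcal{F}$ is taken to be $Sep(A\!:\!B)$, the separable states on $\cH_A\ot\cH_B$, and the set of free channels $\mathfrak{F}$ is chosen to be $\mathfrak{I}\in\{LOCC, SEP\}$. I would then note that $Sep(A\!:\!B)$ is non-empty, convex, and closed, so it is a legitimate set of free states in the sense demanded in the Main results section, and that both LOCC and SEP are entanglement non-generating, meaning every channel in either set maps separable states to separable states --- this is exactly the defining requirement ``$\mathfrak{F}$ must map all free states to free states''. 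The resource monotone $\omega$ is specialized to the trace-norm distance $\omega_1$, which here becomes $E_1(\rho_{AB})=\min_{\sigma\in Sep(A:B)}\norm{\rho_{AB}-\sigma}_1$ (note the paper's $\omega_1$ absorbs the factor $1/2$ into $\frac12\omega_1$, so one should be careful to match the normalization so that $\frac12 E_1$ appears on the right).

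Having set up the dictionary, the second step is to confirm that the trace distance $\frac12\norm{\cdot}_1$ satisfies the two properties invoked throughout the general development: the triangle inequality and the data processing inequality under CPTP maps. Both are standard facts about the trace norm, so Proposition~\ref{prop:g=i} applies and the generating and increasing powers coincide, which is what allows Theorem~\ref{thm:2} to be stated cleanly. With these verifications in hand, I would simply substitute into the conclusion of Theorem~\ref{thm:2}, namely
\begin{eqnarray}
p_{\mathrm{succ}}(\cN,\mathfrak{I},\rho)-p_{\mathrm{succ}}(\cN,\mathfrak{I}, \mathcal{F})\leq \frac{1}{2}\omega_1(\rho_{AB}),
\end{eqnarray}
and read off that $\omega_1(\rho_{AB})=E_1(\rho_{AB})$ with $\mathcal{F}=Sep(A\!:\!B)$, giving exactly the claimed bound.

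The one genuinely substantive point --- which I expect to be the only real obstacle --- is the \emph{closedness} of the free channel set, or more precisely ensuring that the minimization $p_{\mathrm{succ}}(\cN,\mathfrak{I},\rho):=\min_{\cM\in\mathfrak{I}}p_{\mathrm{succ}}(\cN,\cM,\rho)$ is well-defined (attained) for $\mathfrak{I}=LOCC$, since LOCC is notoriously not topologically closed. For SEP this is unproblematic because the separable operations form a closed convex set. For LOCC one would either pass to its closure $\overline{LOCC}$ or, more simply, observe that the inequality in Theorem~\ref{thm:2} is an upper bound and hence survives under taking infima over any subset of entanglement non-generating channels; thus the bound holds for LOCC regardless of attainment. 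I would therefore phrase the final step so that the general Theorem's hypotheses are invoked for the (closed) set of entanglement non-generating maps and then restricted to $LOCC$ or $SEP$ as a sub-theory, at which point the stated inequality follows immediately without any further computation.
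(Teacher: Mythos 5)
Your proposal is correct and matches the paper's (implicit) argument exactly: the paper offers no separate proof for this proposition, presenting it as a direct instantiation of Theorem~\ref{thm:2} with $\mathcal{F}=Sep(A\!:\!B)$ and $\mathfrak{F}\in\{LOCC,SEP\}$, which is precisely what you do, and your extra care about the non-closedness of LOCC is a sensible refinement the paper omits. One small caution: with the paper's conventions $\omega_1(\rho_{AB})=\tfrac12\min_\sigma\norm{\rho_{AB}-\sigma}_1=\tfrac12 E_1(\rho_{AB})$ rather than $E_1(\rho_{AB})$, so Theorem~\ref{thm:2} actually yields the stronger bound $\tfrac14 E_1(\rho_{AB})$ and the stated $\tfrac12 E_1(\rho_{AB})$ follows a fortiori.
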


As for the free measurement case, 
in general, we  can also define the free measurement $\set{\Pi, \mathbb{I}-\Pi}$, where
 $\Pi$ and $\mathbb{I}-\Pi$ are proportional  to some free states.  If a  resource theory has   resource destroying channel $\lambda$
and $\lambda^{\dag}$ is  a resource destroying channel as well, then
  Theorem \ref{thm:coh_f} is still true (see Appendix E of Supplemental Material). 
 However, whether Theorem \ref{thm:coh_f} can be applied to other convex resource theories  is unknown.

\section{Conclusion}
This work considers the fundamental task of channel discrimination from a resource theory perspective, which leads to an intuitive and general framework of operationally quantifying the resource value of quantum channels by how efficiently they can be distinguished from the resource-free ones.
The key observation is that the maximum success probability of distinguishing a channel from the set of free operations by all free states is characterized by the trace-norm resource generating power of the channel. 
As the resource  generating power  satisfies the properties like  positivity, convexity,  sub-multiplicity and the monotonicity under free operations, it establishes an operational framework of quantifying resource in quantum channels.
We demonstrate the power of this framework in the resource theory of quantum coherence.  
In addition to the de-generalized results, we also show that restricting to incoherent POVMs in this task will eliminate any advantage over random guessing. 
Our results shed new light on the operational resource theory 
of quantum channels and in particular the resource theory of coherence.  We hope that the framework will lead to more interesting results for a variety of resource theories and information processing tasks.

\bigskip

\emph{Note added.}  During the revision of this paper, we became aware of a recent work by Liu and Yuan \cite{LiuYuan}, which establishes general connections between the resource generating/increasing power and channel distillation/dilution tasks.

\begin{acknowledgments}
This research was supported in part by the Templeton Religion Trust under grant TRT 0159. L. Li and K. Bu acknowledge Arthur Jaffe  for the support and help.
K. Bu also thanks the support of Academic Awards for Outstanding Doctoral
Candidates from Zhejiang University.  Z.-W.~Liu is supported by AFOSR, ARO, and Perimeter Institute for Theoretical Physics.  Research at Perimeter Institute is supported by the Government
of Canada through Industry Canada and by the Province of Ontario through the Ministry
of Research and Innovation.

\end{acknowledgments}

 \bibliography{Maxcoh-lit}

\appendix

\section{Connections between channel discrimination  and resource 
generating/increasing  power }
\label{apen:1}

Given a distance measure $D: \cD(\cH)\times \cD(\cH)\to \real_+$,  we consider the following  conditions:

(1) Positivity:  $D(\rho, \sigma)\geq 0$, $D(\rho, \sigma)=0$ iff $\rho=\sigma$.

(2) Pseudo joint convexity: 
$D(\sum_ip_i\rho_i, \sum_ip_i\sigma_i)\leq \max_i D(\rho_i, \sigma_i)$ with $\sum_ip_i=1$.

(2') Joint convexity:
$D(\sum_ip_i\rho_i, \sum_ip_i\sigma_i)\leq \sum_i p_iD(\rho_i, \sigma_i)$ with $\sum_ip_i=1$.

(3) Data processing inequality:
$D(\cN(\rho), \cN(\sigma))\leq D(\rho, \sigma)$ for any CPTP map $\cN$.

(4) Triangle inequality: 
$D(\rho, \sigma)\leq D(\rho, \tau)+D(\tau, \sigma)$
for any $\tau\in \cD(\cH)$.

Here, we assume the distance measure always satisfies the condition (1) ,
i.e., positivity.

\begin{lem}\label{lem:5}
For any given distance measure $D$ and quantum channel 
$\cN$, it holds that 
\begin{eqnarray}
 \Omega_D(\cN) =\max_{\rho\in \mathcal{F}}\min_{\cM\in \mathfrak{F}}D(\cN(\rho), \cM(\rho)).
\end{eqnarray}

\end{lem}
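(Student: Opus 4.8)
The plan is to unfold the definitions and reduce the claim to an inner identity, then prove that identity by two inequalities. By definition $\Omega_D(\cN) = \max_{\rho \in \mathcal{F}} \omega_D(\cN(\rho)) = \max_{\rho \in \mathcal{F}} \min_{\sigma \in \mathcal{F}} D(\cN(\rho), \sigma)$, so it suffices to show, for each fixed free probe state $\rho \in \mathcal{F}$, that $\min_{\sigma \in \mathcal{F}} D(\cN(\rho), \sigma) = \min_{\cM \in \mathfrak{F}} D(\cN(\rho), \cM(\rho))$; taking the maximum over $\rho \in \mathcal{F}$ then yields the lemma. Note that no metric property of $D$ beyond positivity is used, which is consistent with the generality of the statement.

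First I would prove the inequality ``$\leq$''. Fix any $\cM \in \mathfrak{F}$. Because free channels map free states to free states and $\rho \in \mathcal{F}$, the output $\cM(\rho)$ is again a free state and hence a feasible point for the left-hand minimization; therefore $\min_{\sigma \in \mathcal{F}} D(\cN(\rho), \sigma) \leq D(\cN(\rho), \cM(\rho))$. Minimizing the right-hand side over $\cM \in \mathfrak{F}$ gives the inequality.

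For the reverse direction ``$\geq$'', given any target free state $\sigma \in \mathcal{F}$ I would exhibit a free channel that outputs it on input $\rho$. The natural candidate is the replacer (prepare-and-discard) channel $\mathcal{R}_\sigma(X) := \trace(X)\,\sigma$, which is manifestly CPTP and sends every input state---in particular every free state---to $\sigma \in \mathcal{F}$, so it is resource non-generating and lies in $\mathfrak{F}$. Since $\mathcal{R}_\sigma(\rho) = \sigma$, we obtain $D(\cN(\rho), \sigma) = D(\cN(\rho), \mathcal{R}_\sigma(\rho)) \geq \min_{\cM \in \mathfrak{F}} D(\cN(\rho), \cM(\rho))$, and minimizing over $\sigma \in \mathcal{F}$ closes the argument.

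The crux---and the only place where the content of $\mathfrak{F}$ enters---is the ``$\geq$'' step: it relies on the replacer channels $\mathcal{R}_\sigma$ belonging to $\mathfrak{F}$ for every $\sigma \in \mathcal{F}$. This is a mild and standard assumption, since preparing a fixed free state is resource non-generating and is contained in the usual free-channel classes (and in the maximal set of all resource non-generating maps). As a final technical remark, I would note that the minima are genuinely attained rather than being mere infima: $\mathcal{F}$ is a closed, hence compact, subset of the density operators on the finite-dimensional $\cH$, and $D(\cN(\rho), \cdot)$ is continuous, so the extrema exist in both formulations.
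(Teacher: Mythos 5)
Your proof is correct and follows essentially the same route as the paper's: the ``$\leq$'' direction uses that free channels send the free probe state into $\mathcal{F}$, and the ``$\geq$'' direction uses a replacer channel preparing a free state (the paper uses the optimal $\sigma^{\star}_{\cN(\rho)}$ directly, you minimize over all $\sigma$ afterwards --- an immaterial difference). Your explicit flagging of the implicit assumption that replacer channels onto free states belong to $\mathfrak{F}$ is a point the paper glosses over with ``it is easy to verify,'' and is worth keeping.
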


\begin{proof}
First,  we have
\begin{align*}
  &\max_{\rho\in \mathcal{F}}\min_{\cM\in \mathfrak{F}}D(\cN(\rho), \cM(\rho))\\
  \geq&\max_{\rho\in \mathcal{F}}\min_{\sigma\in \mathcal{F}}D(\cN(\rho), \sigma)\\
  =&\max_{\rho\in \mathcal{F}}\omega_{D}(\cN(\rho))\\
  =&\Omega_{D}(\cN),
\end{align*}
where the inequality comes from the fact that $\cM(\rho)\in\mathcal{F}$ for any $\rho\in \mathcal{F}$.

Besides, for any $\rho\in \mathcal{F}$, we can define 
the quantum channel $\cN_\rho$  as $\cN_{\rho}(\tau)=\sigma^{\star}_{\cN(\rho)}$ for any quantum
state $\tau\in \cD(\cH)$ with $\sigma^{\star}_{\cN(\rho)}\in \mathcal{F}$ and  
$\omega_{D}(\cN(\rho))=D(\cN(\rho), \sigma^{\star}_{\cN(\rho)})$. 
It is easy to verify that $\cN_\rho$ is a free operation,  i.e.,  $\cN_\rho\in \mathfrak{F}$.  Thus, 

\begin{eqnarray*}
  &&\max_{\rho\in \mathcal{F}}\min_{\cM\in \mathfrak{F}}D(\cN(\rho), \cM(\rho))\\
  &\leq&\max_{\rho\in \mathcal{F}} D(\cN(\rho), \cN_{\rho}(\rho))\\
    &=&\max_{\rho\in \mathcal{F}}D(\cN(\rho), \sigma^{\star}_{\cN(\rho)})\\
  &=&\max_{\rho\in \mathcal{F}}\omega_{D}(\cN(\rho))\\
  &=&\Omega_{D}(\cN),
\end{eqnarray*}
where the inequality comes from the fact that $\cN_{\rho}\in\mathfrak{F}$ and 
$\cN_{\rho} $ maps any quantum state to the free state $\sigma^{\star}_{\cN(\rho)}$.

\end{proof}

\begin{lem}\label{lem:inc}
If the distance measure $D$ satisfies the 
triangle inequality and the data processing inequality ( i.e., non-increasing under CPTP maps), then we have
\begin{eqnarray}
\Omega_D(\cN)
=\widetilde{\Omega}_D(\cN).
\end{eqnarray}
\end{lem}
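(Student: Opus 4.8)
The plan is to prove the two inequalities $\widetilde{\Omega}_D(\cN) \geq \Omega_D(\cN)$ and $\widetilde{\Omega}_D(\cN) \leq \Omega_D(\cN)$ separately, noting in advance that only the second will require both the triangle inequality and data processing. Throughout I would use that, since $\mathcal{F}$ is non-empty, convex, and closed (hence compact in finite dimension), the minima defining $\omega_D$ are attained.

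First I would establish $\widetilde{\Omega}_D(\cN) \geq \Omega_D(\cN)$, which is immediate. Any free state $\rho \in \mathcal{F}$ satisfies $\omega_D(\rho) = 0$ by positivity (take $\sigma = \rho$ in the minimization defining $\omega_D$). Therefore restricting the maximization in $\widetilde{\Omega}_D$ from $\mathcal{D}(\mathcal{H})$ to the subset $\mathcal{F}$ gives $\widetilde{\Omega}_D(\cN) \geq \max_{\rho \in \mathcal{F}}[\omega_D(\cN(\rho)) - \omega_D(\rho)] = \max_{\rho \in \mathcal{F}}\omega_D(\cN(\rho)) = \Omega_D(\cN)$.

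The substantive direction is $\widetilde{\Omega}_D(\cN) \leq \Omega_D(\cN)$, where both hypotheses enter. I would fix an arbitrary state $\rho \in \mathcal{D}(\mathcal{H})$, let $\sigma^\star \in \mathcal{F}$ attain $\omega_D(\rho) = D(\rho, \sigma^\star)$, and let $\tau^\star \in \mathcal{F}$ attain $\omega_D(\cN(\sigma^\star)) = D(\cN(\sigma^\star), \tau^\star)$. Since $\tau^\star$ is free, it is feasible for the minimization defining $\omega_D(\cN(\rho))$, so by the triangle inequality $\omega_D(\cN(\rho)) \leq D(\cN(\rho), \tau^\star) \leq D(\cN(\rho), \cN(\sigma^\star)) + D(\cN(\sigma^\star), \tau^\star)$. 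The data processing inequality bounds the first summand by $D(\rho, \sigma^\star) = \omega_D(\rho)$, and the second summand equals $\omega_D(\cN(\sigma^\star))$ by construction, so $\omega_D(\cN(\rho)) - \omega_D(\rho) \leq \omega_D(\cN(\sigma^\star))$. Since $\sigma^\star \in \mathcal{F}$, the right-hand side is at most $\max_{\rho' \in \mathcal{F}}\omega_D(\cN(\rho')) = \Omega_D(\cN)$; taking the maximum over $\rho \in \mathcal{D}(\mathcal{H})$ then yields $\widetilde{\Omega}_D(\cN) \leq \Omega_D(\cN)$, and combining with the first direction gives equality.

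The main (and quite mild) obstacle is orchestrating the two optimizers $\sigma^\star$ and $\tau^\star$ so that a single application of the triangle inequality routes through a free state while isolating a term to which DPI can be applied. The clean statement to aim for is the chain $\omega_D(\cN(\rho)) \leq \omega_D(\rho) + \omega_D(\cN(\sigma^\star))$; once this is in place, the rest is bookkeeping, and the appearance of $\omega_D(\cN(\sigma^\star))$ with $\sigma^\star$ free is exactly what lets the bound collapse onto $\Omega_D(\cN)$.
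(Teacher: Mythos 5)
Your proof is correct and follows essentially the same route as the paper's: the easy direction comes from restricting the maximization to free states (where $\omega_D$ vanishes), and the substantive direction applies the triangle inequality through $\cN(\sigma^\star)$ for $\sigma^\star$ the closest free state to $\rho$, then data processing to get $\omega_D(\cN(\rho)) - \omega_D(\rho) \leq \omega_D(\cN(\sigma^\star)) \leq \Omega_D(\cN)$. The paper phrases this with a max--min manipulation over all $\tau\in\mathcal{F}$ rather than fixing the explicit optimizers, but the underlying argument is identical.
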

\begin{proof}
It is obvious that $\Omega_{D}\leq\widetilde{\Omega}_{D}$,
thus we only need to prove $\widetilde{\Omega}_{D}\leq\Omega_{D}$.

For any quantum state $\rho\in\cD(\cH)$, we have
\begin{eqnarray*}
  && \omega_{D}(\cN(\rho))-\omega_{D}(\rho)\\
  &=&\min_{\sigma\in \mathcal{F}}D(\cN(\rho), \sigma)
  -\min_{\tau\in \mathcal{F}}D(\rho, \tau)\\
  &=&\max_{\tau\in \mathcal{F}}[\min_{\sigma\in \mathcal{F}}(D(\cN(\rho), \sigma)-D(\rho, \tau))]\\
  &\leq&\max_{\tau\in \mathcal{F}}\min_{\sigma\in \mathcal{F}}[D(\cN(\rho), \sigma)-D(\cN(\rho), \cN(\tau))]\\
  &\leq&\max_{\tau\in \mathcal{F}}\min_{\sigma\in \mathcal{F}}D(\cN(\tau), \sigma)\\
  &=&\max_{\tau\in \mathcal{F}}\omega_{D}(\cN(\tau))\\
  &=&\Omega_{D}(\cN),
\end{eqnarray*}
where the first inequality comes from  the 
data processing inequality
and the  second inequality comes from the triangle inequality of $D$. 
Therefore, we have $\widetilde{\Omega}_{D}(\cN)\leq\Omega_{D}(\cN)$.

\end{proof}

\begin{mproof}[ Proof of Theorem 2]
It is easy to verify that 
trace-norm satisfies the data processing inequality  and the triangle 
inequality. 
Thus, according to Lemma \ref{lem:5} and \ref{lem:inc}, we have 
\begin{eqnarray*}
\widetilde{\Omega}_1(\cN)=
 \Omega_1(\cN) =\frac{1}{2}\max_{\rho\in \mathcal{F}}\min_{\cM\in \mathfrak{F}}\norm{\cN(\rho)- \cM(\rho)}_1.
\end{eqnarray*}
Besides, the success probability $p_{\rm succ}(\cN,\mathfrak{F}, \mathcal{F})$
can be expressed as
\begin{eqnarray*}
p_{\rm succ}(\cN,\mathfrak{F}, \mathcal{F})&=&\frac{1}{2}+\frac{1}{4}\max_{\rho\in \mathcal{F}}\min_{\cM\in \mathfrak{F}}\norm{\cN(\rho)- \cM(\rho)}_1\\
&=&\frac{1}{2}+\frac{1}{2}\widetilde{\Omega}_1(\cN)\\
&=&\frac{1}{2}+\frac{1}{2}\Omega_1(\cN).
\end{eqnarray*}
\end{mproof}

\begin{cor}
If we take the distance measure $D$ to be max-relative entropy $D_{\max}$ or fidelity $D_F$, then we have 

\begin{eqnarray}
\widetilde{\Omega}_D(\cN)=
 \Omega_D(\cN) =\max_{\rho\in \mathcal{F}}\min_{\cM\in \mathfrak{F}}D(\cN(\rho), \cM(\rho)),
\end{eqnarray}
where $D_F(\rho, \sigma)
=\sqrt{1-F^2(\rho, \sigma)}$ with 
$F(\rho, \sigma)=\Tr{|\sqrt{\rho}\sqrt{\sigma}|}$.
\end{cor}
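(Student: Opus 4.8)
The plan is to reduce the corollary entirely to the two lemmas already established, so that the only remaining work is to check that $D_{\max}$ and $D_F$ satisfy the triangle inequality and the data-processing inequality. First I would note that the min-max identity $\Omega_D(\cN)=\max_{\rho\in\mathcal{F}}\min_{\cM\in\mathfrak{F}}D(\cN(\rho),\cM(\rho))$ requires no hypothesis beyond positivity: the proof of Lemma \ref{lem:5} uses only that free channels preserve $\mathcal{F}$ and the existence of the constant free channel $\cN_\rho$ sending every state to the optimal free state $\sigma^{\star}_{\cN(\rho)}$, both of which remain available for an arbitrary $D$. The second identity $\Omega_D=\widetilde{\Omega}_D$ is precisely the conclusion of Lemma \ref{lem:inc}, whose hypotheses are exactly the triangle inequality and the data-processing inequality. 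Hence it suffices to verify these two properties for each measure; note that neither lemma uses symmetry of $D$, so the asymmetry of $D_{\max}$ causes no difficulty.

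For the max-relative entropy $D_{\max}(\rho\|\sigma)=\log\min\{\lambda:\rho\leq\lambda\sigma\}$, the data-processing inequality is a standard property. For the triangle inequality I would argue directly from the defining operator inequalities: from $\rho\leq 2^{D_{\max}(\rho\|\tau)}\tau$ and $\tau\leq 2^{D_{\max}(\tau\|\sigma)}\sigma$, composition yields $\rho\leq 2^{D_{\max}(\rho\|\tau)+D_{\max}(\tau\|\sigma)}\sigma$, whence $D_{\max}(\rho\|\sigma)\leq D_{\max}(\rho\|\tau)+D_{\max}(\tau\|\sigma)$. Positivity holds because $\rho\leq\sigma$ together with $\Tr{\rho}=\Tr{\sigma}=1$ forces $\rho=\sigma$. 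Thus Lemmas \ref{lem:5} and \ref{lem:inc} apply verbatim to $D_{\max}$, giving both equalities.

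For the fidelity-based distance $D_F(\rho,\sigma)=\sqrt{1-F^2(\rho,\sigma)}$, the data-processing inequality follows because the fidelity is non-decreasing under any CPTP map, so $1-F^2$ is non-increasing and the monotone square root preserves this. The one genuinely nontrivial ingredient—and the step I expect to be the main obstacle—is the triangle inequality, i.e.\ the assertion that the ``sine distance'' $\sqrt{1-F^2}$ is a bona fide metric. This is a known but non-elementary fact; rather than rely on the naive estimate that it is $\sin$ of the Bures angle (which is not automatically a metric, since $\sin$ is not monotone over the full range of possible angle sums), I would reprove it through the characterization of $D_F$ as a minimal trace distance over purifications: by Uhlmann's theorem $F(\rho,\sigma)=\max|\iinner{\psi}{\phi}|$ over purifications $\ket{\psi},\ket{\phi}$ of $\rho,\sigma$, and since $\frac12\norm{\proj{\psi}-\proj{\phi}}_1=\sqrt{1-|\iinner{\psi}{\phi}|^2}$, one obtains $D_F(\rho,\sigma)=\min\frac12\norm{\proj{\psi}-\proj{\phi}}_1$. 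The triangle inequality for $D_F$ then descends from that of the trace distance by a standard joint-purification argument. With both properties in hand, Lemma \ref{lem:inc} gives $\Omega_{D_F}=\widetilde{\Omega}_{D_F}$ and Lemma \ref{lem:5} supplies the min-max form, completing the corollary.
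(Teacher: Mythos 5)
Your proposal is correct and follows essentially the same route as the paper: reduce the corollary to Lemmas \ref{lem:5} and \ref{lem:inc} and then verify the data-processing and triangle inequalities for $D_{\max}$ and $D_F$. The only difference is cosmetic --- the paper cites references for these two properties while you sketch self-contained proofs (the operator-inequality composition for $D_{\max}$ and the purification/trace-distance argument for the sine distance), both of which are sound.
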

\begin{proof}
It has been proved that $D_{\max}$ satisfies the 
data processing inequality \cite{Datta2009IEEE} and
the triangle inequality comes directly from the definitions.
Besides, it has been proved that 
$D_F$ satisfies the  data processing inequality \cite{Barnum96} and
the triangle inequality \cite{Gilchrist05,Uhlmann76}.
\end{proof}

Now, let us consider the example of  coherence.
In single-qubit system, it has been proved that 
trace-norm of coherence
$C_1$ is equivalent to $l_1$ norm of coherence $C_{l_1}$ \cite{Rana2016pra,Shao2015} and 
the analytic form of coherence generating power for unitary operations has been 
obtained in \cite{Bu2017PLA}. Therefore, we have the following corollary,
\begin{cor}
Given a single-qubit unitary $U=[U_{ij}]_{i,j=1,2}$, the coherence generating power by  trace-norm
is 
\begin{eqnarray}
\mathcal{C}_1(U)=\max_{i=1,2} |U_{i1}U_{i2}|. 
\end{eqnarray}
Specially, for the Hadamard gate $H$, $\mathcal{C}_1(H)=1/2$.
\end{cor}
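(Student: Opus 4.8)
The plan is to reduce the trace-norm coherence generating power to a maximization of the trace-norm coherence over a few explicit output states, evaluate that coherence using the single-qubit identity that makes $C_1$ a simple off-diagonal modulus, and finally rewrite the outcome in the claimed row form using the rigidity of $2\times 2$ unitaries. First I would unfold the definition as $\mathcal{C}_1(U)=\max_{\rho\in\cI}C_1\bigl(U\rho U^{\dagger}\bigr)$. Because $C_1$ is the trace distance to the convex set $\cI$, it is convex in its argument, and the incoherent states form a simplex whose extreme points are the basis projectors $\proj{k}$; hence the maximum is attained at a vertex and $\mathcal{C}_1(U)=\max_{k}C_1\bigl(U\proj{k}U^{\dagger}\bigr)$. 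This cuts the optimization down to evaluating $C_1$ on the two pure output states $U\proj{k}U^{\dagger}$.

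Next I would use the single-qubit fact, cited above, that $C_1$ coincides with $C_{l_1}$, so that for a qubit state $\tau$ one has $C_1(\tau)=|\tau_{12}|$, the modulus of the off-diagonal entry (this is the point where the normalization must be tracked carefully: the trace-distance-optimal incoherent state of a qubit is the fully dephased state $\Delta(\tau)$, giving $\norm{\tau-\Delta(\tau)}_1=2|\tau_{12}|$ and hence $C_1(\tau)=|\tau_{12}|$ with the $\tfrac12$ in the definition). Since the $(1,2)$ entry of $U\proj{k}U^{\dagger}$ is $U_{1k}\overline{U_{2k}}$, this yields $C_1\bigl(U\proj{k}U^{\dagger}\bigr)=|U_{1k}|\,|U_{2k}|$ and therefore $\mathcal{C}_1(U)=\max_{k}|U_{1k}U_{2k}|$. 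Equivalently, this is just the dimension-two specialization of the analytic $\mathcal{C}_{l_1}$ formula for unitaries of \cite{Bu2017PLA}.

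Finally I would convert this column-index maximum into the stated row-index form. Orthonormality of the rows and columns of $U$ gives $|U_{11}|^2+|U_{12}|^2=|U_{11}|^2+|U_{21}|^2=1$, whence $|U_{12}|=|U_{21}|$ and similarly $|U_{11}|=|U_{22}|$. Consequently all four products $|U_{1k}U_{2k}|$ and $|U_{i1}U_{i2}|$ equal the single number $|U_{11}|\,|U_{12}|$, so $\max_{k}|U_{1k}U_{2k}|=\max_{i}|U_{i1}U_{i2}|$, matching the claim. For the Hadamard gate every entry has modulus $1/\sqrt{2}$, giving $\mathcal{C}_1(H)=1/2$.

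The main obstacle is not any single hard estimate but rather keeping the normalization and extremality arguments honest: one must justify the reduction to basis-state probes by convexity, and verify that for a qubit the trace-norm coherence really equals the bare off-diagonal modulus (equivalently, that the dephased state is trace-distance optimal). Once these two points are secured, the remaining computation is elementary.
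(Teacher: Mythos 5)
Your proof is correct and follows essentially the same route as the paper, which obtains the corollary by citing the qubit identity $C_1=C_{l_1}$ (up to normalization) from \cite{Rana2016pra,Shao2015} together with the analytic cohering-power formula of \cite{Bu2017PLA}; you simply supply the explicit derivations of those two cited facts (trace-distance optimality of $\Delta(\tau)$ for a qubit, and reduction to basis-state probes by convexity). Both ingredients are verified correctly, and the normalization and the row/column interchange via unitarity are handled properly.
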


\section{Properties of $\Omega_{D}(\cN)$ }\label{apen:2}

Now, let us investigate the properties of 
$\Omega_D(\cN)$ for any distance measure  $D$. We assume that 
the free states on $\mathcal{H}_A\otimes \mathcal{H}_B$ 
is defined as convex combination of the tensor product of 
free states on $\mathcal{H}_A$ and $\mathcal{H}_B$, i.e.,
$\mathcal{F}_{AB}= Conv\set{\mathcal{F}_A\otimes \mathcal{F}_B }$.

\begin{lem}\label{prop:proD}
 Given any distance measure $D$, $\Omega_{D}(\cdot)$
 has the following properties: 
 
(i) $\Omega_{D}(\cN)\geq 0$, and $\Omega_{D}(\cN)=0$ if $\cN\in \mathfrak{F}$. Moreover, if $\mathfrak{F}$ includes 
all CPTP maps which maps all free states to free states, then $\Omega_{D}(\cN)=0$ iff $\cN\in \mathfrak{F}$.
 
(ii)  If the distance measure $D$ satisfies the data processing inequality: For any  $\cM_1, \cM_2 \in \mathfrak{F}$,
\begin{eqnarray}
 \Omega_{D}(\cM_1\circ \cN\circ \cM_2)
\leq  \Omega_{D}(\cN).
\end{eqnarray}

 (iii) If the distance measure $D$ satisfies joint convexity: 
 Given a set of quantum channels $\set{\cN_i, p_i}_i$ with $\sum_ip_i=1$, 
 \begin{eqnarray}
  \Omega_{D}(\sum_i p_i\cN_i)
\leq  \sum_i p_i\Omega_{D}(\cN_i).
 \end{eqnarray}

(iv) If the distance measure $D$ satisfies the pseudo joint convexity and  data processing inequality: 
 Given two channels $\cN_1$ and $\cN_2$, it holds that 
\begin{eqnarray}
\Omega_{D}(\cN_1\ot\cN_2)\geq \max\set{\Omega_{D}(\cN_1), \Omega_{D}(\cN_2)}.
\end{eqnarray}

(v)  If the distance measure $D$ satisfies the pseudo joint convexity, data processing inequality
and  triangle inequality: 
Given two channels $\cN_1$ and $\cN_2$, it holds that 
\begin{eqnarray}
 \Omega_{D}(\cN_1\ot\cN_2) \leq \Omega_{D}(\cN_1)+ \Omega_{D}(\cN_2).
 \end{eqnarray}

 \end{lem}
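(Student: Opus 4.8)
The plan is to reduce every claim to a corresponding property of the state measure $\omega_D$ and then lift it to channels through the definition $\Omega_D(\cN)=\max_{\rho\in\cF}\omega_D(\cN(\rho))$, using the max-min reformulation of Lemma \ref{lem:5} where convenient. Property (i) is immediate: positivity $\Omega_D(\cN)\ge 0$ follows from $D\ge 0$, and if $\cN\in\mathfrak{F}$ then $\cN(\rho)\in\cF$ for every $\rho\in\cF$, so $\omega_D(\cN(\rho))=0$ upon taking $\sigma=\cN(\rho)$ in the minimization. For the converse under the resource-non-generating assumption, $\Omega_D(\cN)=0$ forces $\omega_D(\cN(\rho))=0$ for all $\rho\in\cF$; since $\cF$ is closed the minimum is attained, and condition (1) yields $\cN(\rho)\in\cF$, i.e.\ $\cN$ is resource non-generating and hence in $\mathfrak{F}$.

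The monotonicity facts (ii) and (iii) I would derive from a single observation, namely that under data processing $\omega_D$ is itself monotone under free channels: for $\cM_1\in\mathfrak{F}$ one has $\omega_D(\cM_1(\tau))=\min_{\sigma\in\cF}D(\cM_1(\tau),\sigma)\le\min_{\sigma'\in\cF}D(\cM_1(\tau),\cM_1(\sigma'))\le\omega_D(\tau)$, where the first step restricts the minimizer to states of the form $\cM_1(\sigma')\in\cF$ and the second applies data processing. For (ii) I combine this with the fact that $\cM_2$ maps the free input $\rho$ to a free state, so optimizing over $\rho\in\cF$ on the left is dominated by optimizing over free inputs for $\cN$ alone. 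For (iii) I instead invoke joint convexity: picking $\sigma_i^\star$ attaining $\omega_D(\cN_i(\rho))$, the combination $\sum_ip_i\sigma_i^\star$ lies in $\cF$, and joint convexity gives $\omega_D(\sum_ip_i\cN_i(\rho))\le\sum_ip_i\omega_D(\cN_i(\rho))$; maximizing over $\rho$ and pushing the maximum inside the sum finishes it.

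For the tensor bounds I would exploit $\cF_{AB}=\mathrm{Conv}\{\cF_A\ot\cF_B\}$. The lower bound (iv) comes from feeding the product input $\rho_1^\star\ot\tau_2$, where $\rho_1^\star$ is optimal for $\cN_1$ and $\tau_2\in\cF_B$ is arbitrary: this input is free, and data processing under the partial trace shows $\omega_D(\cN_1(\rho_1^\star)\ot\cN_2(\tau_2))\ge\omega_D(\cN_1(\rho_1^\star))=\Omega_D(\cN_1)$, using that $\Ptr{B}{\sigma_{AB}}\in\cF_A$ for every $\sigma_{AB}\in\cF_{AB}$ by convexity of $\cF_A$; symmetry supplies the other term.

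The upper bound (v) is the main obstacle and is where all three conditions enter. For a product input $\alpha\ot\beta$ I bound $\omega_D(\cN_1(\alpha)\ot\cN_2(\beta))$ by comparing against the free product state $\sigma_A^\star\ot\sigma_B^\star$: the triangle inequality splits the distance through the intermediate point $\sigma_A^\star\ot\cN_2(\beta)$ into two terms, and data processing applied to the append-a-fixed-state channels $X\mapsto X\ot\cN_2(\beta)$ and $X\mapsto\sigma_A^\star\ot X$ collapses these terms to $\omega_D(\cN_1(\alpha))$ and $\omega_D(\cN_2(\beta))$. The delicate step is passing from product inputs to a general free input $\rho_{AB}=\sum_kq_k\alpha_k\ot\beta_k$: here pseudo joint convexity is exactly what lets me compare $\sum_kq_k\cN_1(\alpha_k)\ot\cN_2(\beta_k)$ against $\sum_kq_k\sigma_{A,k}^\star\ot\sigma_{B,k}^\star\in\cF_{AB}$ by a single $\max_k$ of the per-branch bounds, each of which is at most $\Omega_D(\cN_1)+\Omega_D(\cN_2)$. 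Maximizing over $\rho_{AB}\in\cF_{AB}$ then closes the argument.
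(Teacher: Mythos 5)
Your proposal is correct, and it follows essentially the same route as the paper: every claim is reduced to a property of the state monotone $\omega_D$ (faithfulness, monotonicity under free channels, convexity, behaviour under partial trace and under tensoring) and then lifted to $\Omega_D$ through the optimization over free inputs. Properties (i)--(iv) match the paper's argument almost line for line; for (ii) you phrase the left-composition step as monotonicity of $\omega_D$ under free channels rather than via the max-min form of Lemma~\ref{lem:5}, but the inequality chain is the same one. The one place where you go beyond the written proof is (v): the paper only establishes subadditivity on product states, $\omega_D(\rho_1\ot\rho_2)\leq\omega_D(\rho_1)+\omega_D(\rho_2)$, and asserts that this suffices, even though a generic element of $\mathcal{F}_{AB}=Conv\set{\mathcal{F}_A\ot\mathcal{F}_B}$ is a convex mixture $\sum_k q_k\,\alpha_k\ot\beta_k$ whose image under $\cN_1\ot\cN_2$ is not a product state. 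Your explicit use of pseudo joint convexity to compare $\sum_k q_k\,\cN_1(\alpha_k)\ot\cN_2(\beta_k)$ against $\sum_k q_k\,\sigma_{A,k}^\star\ot\sigma_{B,k}^\star\in\mathcal{F}_{AB}$ and reduce to a single $\max_k$ of per-branch bounds is exactly the missing step, and it explains why pseudo joint convexity appears in the hypotheses of (v) at all. So your write-up is, if anything, more complete than the paper's on this point; everything else is equivalent.
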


 \begin{proof}
 (i) This comes directly from the definition.
   
  (ii)
 For any $\cM\in \mathfrak{F}$,
 \begin{eqnarray*}
&&  \Omega_{D}(\cM\circ \cN)\\
&=&\max_{\rho\in \mathcal{F}}\min_{\cM\in \mathfrak{F}} D(\cM\circ\cN(\rho), \cM(\rho))\\
&\leq& \max_{\rho\in \mathcal{F}}\min_{\cM\in \mathfrak{F}}  D(\cM\circ\cN(\rho), \cM\circ\cM(\rho))\\
&\leq&  \max_{\rho\in \mathcal{F}}\min_{\cM\in \mathfrak{F}}D(\cN(\rho), \cM(\rho))\\
&=& \Omega_{D}( \cN), 
 \end{eqnarray*}
 where the first inequality comes from the fact that 
$\cM\circ \cM \in \mathfrak{F}$ for any $\cM\in \mathfrak{F}$ and the second inequality 
comes from the data processing inequality.

Besides,  
\begin{eqnarray*}
 \Omega_{D}( \cN\circ\cM)
 =\max_{\rho\in \mathcal{F}}
 \omega_D(\cN(\cM(\rho)))
 \leq \max_{\rho\in \mathcal{F}}
  \omega_D(\cN(\rho)),
\end{eqnarray*}
where the inequality comes from the fact 
$\cM(\mathcal{F})\subset\mathcal{F}$.
 \end{proof}

 (iii)
 Since $D$ is jointly convex, then the corresponding resource monotone
 $\omega_D$ is convex, i.e., $\omega_D(\sum_ip_i\rho_i)\leq \sum_ip_i\omega_D(\rho_i)$.
  Thus, 
   
 \begin{eqnarray*}
&& \Omega_D(\sum_ip_i\cN_i)\\
&=&\max_{\rho\in\mathcal{F}}\omega_D(\sum_i p_i\cN_i(\rho))\\
& \leq&\max_{\rho\in\mathcal{F}}\sum_ip_i\omega_D(\cN_i(\rho))\\
 &\leq& \sum_ip_i\max_{\rho\in\mathcal{F}}\omega_D(\cN_i(\rho))\\
 &=&\sum_ip_i\Omega_D(\cN_i).
 \end{eqnarray*}

 (iv) 
We only need 
to prove that 
$\max\set{\omega_D(\rho_1), \omega_D(\rho_2)}\leq \omega_D(\rho_1\ot \rho_2)  $.

First, 
\begin{eqnarray}
\min_{\tau_{12}\in \mathcal{F}_{12}}
D(\rho_1\ot\rho_2, \tau_{12})
\geq \min_{\tau_1\in \mathcal{F}_1}D(\rho_1, \tau_1),
\end{eqnarray}
 where $\tau_1=\Ptr{2}{\tau_{12}}$ and 
 the inequality comes from the data processing inequality. Hence, we have 
 $\omega_D(\rho_1\ot \rho_2)\geq \omega_D(\rho_1)$. Similarly, we have 
 $\omega_D(\rho_1\ot \rho_2)\geq \omega_D(\rho_2)$. 
 
 (v) 
We only need to prove that 
 $\omega_D(\rho_1\ot \rho_2)  \leq \omega_D(\rho_1)+\omega_D(\rho_2)$.
Due to the data processing inequality,  
we have 
\begin{eqnarray}
D(\rho, \sigma)
=D(\rho\ot \tau, \sigma\ot \tau),
\end{eqnarray}
because both partial trace and tensoring with a quantum state
are CPTP maps.

Therefore, we have
 \begin{eqnarray*}
&&\min_{\tau_{12}\in \cI}
D(\rho_1\ot\rho_2, \tau_{12})\\
&\leq& D(\rho_1\ot\rho_2, \tau_1\ot\tau_2)\\
&\leq&D(\rho_1\ot\rho_2, \tau_1\ot\rho_2)
+D(\tau_1\ot\rho_2, \tau_1\ot\tau_2)\\
&=& D(\rho_1, \tau_1)
+D(\rho_2, \tau_2)\\
&=&\omega_D(\rho_1)+\omega_D(\rho_2).
 \end{eqnarray*}
 where the free states  $\tau_1$ and $\tau_2$ are chosen to satisfy the conditions  $\omega_D(\rho_1)=D(\rho_1, \tau_1)$ and 
 $\omega_D(\rho_2)=D(\rho_2, \tau_2)$.

 \begin{mproof}[Proof of Proposition 3]
Since the trace norm satisfies the  joint convexity, data processing inequality
and  triangle inequality, then the Proposition 3
comes directly from the Lemma \ref{prop:proD}.

 \end{mproof}

 \section{Upper bound for $p_{\mathrm{succ}}(\cN,\mathfrak{F},\rho)$ }\label{appen:thm2}

  \begin{mproof} [Proof of Theorem 4]
Since
\begin{eqnarray*}
\frac{1}{2}\min_{\cM\in \mathfrak{F}}\|\cN(\rho)-\cM(\rho)\|_{1}\leq 
\omega_{1}(\cN(\rho)),
\end{eqnarray*}
then by Theorem  2  and the definition of $p_{\mathrm{succ}}(\cN,\mathfrak{F},\rho)
$, we have 
\begin{eqnarray*}
 && p_{\mathrm{succ}}(\cN,\mathfrak{F},\rho)
-p_{\mathrm{succ}}(\cN,\mathfrak{F}, \mathcal{F}) \\
&=&\frac{1}{4}\min_{\cM\in \mathfrak{F}}\norm{\cN(\rho)-\cM(\rho)}_1-\frac{1}{2}\widetilde{\Omega}_{1}(\cN)\\
  &\leq&\frac{1}{2}(\omega_{1}(\cN(\rho))
   -\widetilde{\Omega}_{1}(\cN))\\
  & \leq& \frac{1}{2}\omega_1(\rho),
\end{eqnarray*}
where the second inequality comes from the fact that 
\begin{eqnarray*}
  \omega_{1}(\cN(\rho))-\omega_{1}(\rho)
  \leq \widetilde{\Omega}_{1}(\cN),
\end{eqnarray*}
for any $\rho\in\cD(\cH)$.
Thus,  we complete the proof.

\end{mproof}

 \section{Improvement from coherent states in channel discrimination}\label{appen:imp}
 
 \begin{mproof}[Proof of Proposition 7]
It has been shown that there exists some quantum channel $\cN_*\in MIO$ but not
$IO$, i.e, there exists some quantum state $\rho$ such that
$\cN_*(\rho)\neq \cM(\rho)$ for any $\cM\in IO$ \cite{Bu2016QIC}, which implies that 
\begin{eqnarray*}
\max_{\cM\in IO}\norm{\cN_*(\rho)-\cM(\rho)}_1>0.
\end{eqnarray*}
Thus, we have $p_{\rm succ}(\cN_*, IO, Q)>1/2$.  However, 
due to Proposition 6, we have
\begin{eqnarray*}
p_{\rm succ}(\cN_*, SIO, \cI)=p_{\rm succ}(\cN_*, IO, \cI)=p_{\rm succ}(\cN_*, MIO, \cI)=1/2,
\end{eqnarray*}
 as $\cN_*\in MIO$. Thus, we have 
\begin{eqnarray*}
p_{\rm succ}(\cN_*, IO, Q)>p_{\rm succ}(\cN_*, IO, \cI).
\end{eqnarray*}
Besides, since $SIO\subset IO$, then 
$p_{\rm succ}(\cN_*, SIO, Q)\geq p_{\rm succ}(\cN_*, IO, Q)$. 
Therefore, 
\begin{eqnarray*}
p_{\rm succ}(\cN_*, SIO, Q)> p_{\rm succ}(\cN_*, SIO, \cI).
\end{eqnarray*}

\end{mproof}

 \section{Discrimination with incoherent measuresment}\label{appen:coh_f}

  \begin{mproof} [Proof of Theorem 9]
It is easy to see that 
 \begin{eqnarray*}
&& \max_{\substack{\{\Pi, \mathbb{I}-\Pi\}\\ \Pi~ \text{diagonal}}}\left\{\frac{1}{2}\Tr{\cN(\rho)\Pi}+\frac{1}{2}\Tr{\cM(\rho)(\mathbb{I}-\Pi)}\right\}\\
&=& \max_{\substack{\{\Pi, \mathbb{I}-\Pi\}\\ \Pi~ \text{diagonal}}}\left\{\frac{1}{2}\Tr{\cN(\rho)\Delta(\Pi)}+\frac{1}{2}\Tr{\cM(\rho)(\mathbb{I}-\Delta(\Pi))}\right\}\\
&=&\max_{\substack{\{\Pi, \mathbb{I}-\Pi\}\\ \Pi~ \text{diagonal}}}\left\{\frac{1}{2}\Tr{\Delta^{\dag}\circ\cN(\rho)\Pi}+\frac{1}{2}\Tr{\Delta^{\dag}\circ\cM(\rho)(\mathbb{I}-\Pi)}\right\}\\
 &\leq & \max_{\{\Pi, \mathbb{I}-\Pi\}}\left\{\frac{1}{2}\Tr{\Delta^{\dag}\circ\cN(\rho)\Pi}+\frac{1}{2}\Tr{\Delta^{\dag}\circ\cM(\rho)(\mathbb{I}-\Pi)}\right\}\\
& =&\frac{1}{2}+\frac{1}{4}\norm{\Delta^{\dag}\circ\cN(\rho)-\Delta^{\dag}\circ\cM(\rho)}_1.
 \end{eqnarray*}
Besides, 
$\Delta^{\dag}$ satisfies the conditions
that $\Delta^{\dag}(\cD(\cH))\subset \mathcal{I}$ and $\Delta^{\dag}(\rho)=\rho$ for any $\rho\in\mathcal{I}$, which implies that
\begin{eqnarray*}
\frac{1}{2}\min_{\cM\in \mathfrak{J}}\norm{\Delta^{\dag}\circ\cN(\rho)-\Delta^{\dag}\circ\cM(\rho)}_{1}=
C_{1}(\Delta^{\dag}\circ\cN(\rho))=0.
\end{eqnarray*}

 \end{mproof}
 Note that, in any other resource theory with resource destroying channel $\lambda$,
 we  can also define the free measurement $\set{\Pi, \mathbb{I}-\Pi}$, where
 $\Pi$ and $\mathbb{I}-\Pi$ are propositional  to some free states.  Then it is to see that the above proof still works 
 for the free measurement case 
if  the resource destroying map satisfies that conditions that $\lambda^{\dag}(\cD(\cH))\subset \mathcal{F}$ and $\lambda^{\dag}(\rho)=\rho$ for any $\rho\in\mathcal{F}$, i.e, 
$\lambda^{\dag}$ is a resource destroying map \cite{Liu2017}.

\end{document}